\newtheorem{theorem}{Theorem}[section]
\newtheorem{lemma}[theorem]{Lemma}
\newtheorem{fact}[theorem]{Fact}
\newtheorem{definition}[theorem]{Definition}
\newcommand{\calvin}[1]{ [[[ \textcolor{blue}{\bf Cal:} {\em #1} ]]]}
\newcommand{\sebastian}[1]{ [[[ \textcolor{green}{\bf Seb:} {\em #1} ]]]}
\newcommand{\mohsen}[1]{ [[[ \textcolor{blue}{\bf Mohsen:} {\em #1}
  ]]]}
\newcommand{\fabian}[1]{ [[[ \textcolor{orange}{\bf Fabian:} {\em #1}
  ]]]}
\newcommand{\seth}[1]{ [[[ \textcolor{red}{\bf Seth:} {\em #1} ]]]}
\newcommand{\hide}[1]{}
\newcommand{\whp}[1][\empty]{\ensuremath{\text{w.h.p.}\ifthenelse{\equal{#1}{\empty}}{}{(#1)}}}
\newcommand{\Whp}[1][\empty]{\ensuremath{\text{W.h.p.}\ifthenelse{\equal{#1}{\empty}}{}{(#1)}}}
\algnewcommand\algorithmicswitch{\textbf{switch}}
\algnewcommand\algorithmiccase{\textbf{case}}
\algnewcommand\algorithmicwithprob{\textbf{with probability}}
\algnewcommand\algorithmicotherwise{\textbf{otherwise}}
\newcommand{\FullOrShort}{short}
  \newcommand{\fullOnly}[1]{#1}
  \newcommand{\shortOnly}[1]{}
    \newcommand{\fullOnly}[1]{}
    \newcommand{\shortOnly}[1]{#1}
\newcounter{tempAppCounter}
\newcounter{tempSecNumber}
\newcommand{\setAppCounter}[1]{
  \setcounter{tempAppCounter}{\arabic{theorem}}
  \setcounter{tempSecNumber}{\arabic{section}}

  \setcounter{theorem}{\arabic{ctr:#1}}
  \setcounter{section}{\arabic{ctrsec:#1}}
  \renewcommand{\thetheorem}{\arabic{section}.\arabic{theorem}}
}
\newcommand{\renewAppCounter}{
  \setcounter{theorem}{\arabic{tempAppCounter}}
  \setcounter{section}{\arabic{tempSecNumber}}
  \def\thetheorem{\oldtheorem}
}
\newcommand{\remark}[2][]
{{\color{green}\ifthenelse{\equal{#1}{}}{Remark:}{Remark (#1):} #2}}
\begin{document}
\date{}
\title{How to Discreetly Spread a Rumor in a Crowd}
\author{Mohsen Ghaffari \\ MIT \\ \texttt{ghaffari@mit.edu} \and Calvin Newport \\ Georgetown University \\ \texttt{cnewport@cs.georgetown.edu}}

\maketitle

\begin{abstract}
In this paper, we study PUSH-PULL style rumor spreading algorithms in the {\em mobile telephone model}, a variant of the classical {\em telephone model}
in which each node can participate in at most one connection per round; i.e., you can no longer have multiple
nodes pull information from the same source in a single round.
Our model also includes two new parameterized generalizations:
  (1) the network topology can undergo a bounded rate of change
(for a parameterized rate that spans from no changes to changes in every round);
and (2) in each round, each node can advertise a bounded amount of information to all of its neighbors before connection
decisions are made (for a parameterized number of bits that spans from no advertisement to large advertisements).
We prove that in the mobile telephone model with no advertisements and no topology changes,
PUSH-PULL style algorithms perform poorly with respect to a graph's vertex expansion and graph conductance
as compared to the known tight results in the classical telephone model. 
We then prove, however,
that if nodes are allowed to advertise a single bit in each round,
a natural variation of PUSH-PULL terminates in time that matches (within logarithmic factors)
this strategy's performance in the classical telephone model---even in the presence of frequent topology changes.
We also analyze how the performance of this algorithm degrades as the rate of change increases
toward the maximum possible amount.
We argue that our model matches well the properties of emerging peer-to-peer communication standards
for mobile devices, and that our efficient PUSH-PULL variation that leverages small advertisements and
adapts well to topology changes is a good
choice for rumor spreading in this increasingly important setting.
\end{abstract}

\vspace{1cm}




\section{Introduction}
\label{sec:intro}

Imagine the following scenario. Members of your organization are located throughout a crowded conference hall. 
You know a rumor that you want to spread to all the members of your organization, but you do not want anyone else
in the hall to learn it.
To maintain discreetness, 
communication
occurs only through whispered one-on-one conversations held between pairs of nearby members of your organization.
In more detail, time proceeds in rounds. In each round, each member of your organization
 can attempt to initiate a whispered conversation with a single
nearby member in the conference hall. To avoid drawing attention, each member can only whisper to one person per round.
{In this paper,} we study how quickly simple random strategies will propagate your rumor in this imagined crowded conference hall scenario.

\paragraph{The Classical Telephone Model.}
At first encounter, the above scenario seems mappable to the well-studied problem of rumor spreading  in the classical {\em telephone model.}
In more detail, the telephone model describes a network topology as a graph $G=(V,E)$ of size $n=|V|$ with a computational process (called
{\em nodes} in the following) associated with each vertex in $V$.
In this model, 
an edge $\{u,v\} \in E$ indicates that node $u$ can communicate directly with node $v$.
Time proceeds in rounds. In each round, each node can initiate a {\em connection} (e.g., place a telephone call) with a neighbor  
in $G$ through which the two nodes can then communicate. 

There exists an extensive literature on the performance of a random rumor spreading strategy called PUSH-PULL
 in the telephone model under different graph assumptions; e.g.,~\cite{chierichetti2010rumour, giakkoupis2011tight, giakkoupis2012rumor, giakkoupis2014tight}. 
The PUSH-PULL algorithm works as follows: {\em in each round, each node connects to a neighbor selected with uniform randomness;
if exactly one node in the connection is {\em informed} (knows the rumor) and one node is {\em uninformed} (does not know the rumor),
then the rumor is spread from the informed to the uninformed node.}
An interesting series of papers culminating only recently
established that PUSH-PULL terminates (with high probability)
in $\Theta((1/\alpha)\log^2{n})$ rounds in graphs with vertex expansion $\alpha$~\cite{giakkoupis2014tight},
and in $\Theta((1/\phi)\log{n})$ rounds in graphs with graph conductance $\phi$~\cite{giakkoupis2011tight}. (see Section~\ref{sec:prelim} for definitions
of $\alpha$ and $\phi$.) 

It might be tempting to use these bounds to
describe the performance of the PUSH-PULL strategy in our above conference hall scenario---{\em but they do not apply}.
A well-known quirk of the telephone model is that a given node can accept an unbounded number of incoming connections
in a single round. For example, if a node $u$ has $n-1$ neighbors initiate a connection in a given round,
in the classical telephone model $u$ is allowed to accept all $n-1$ connections and communicate with all $n-1$ neighbors in that round.
In our conference hall scenario, by contrast, we enforce the natural assumption that each node can participate in at
most one connection per round.  (To share the rumor to multiple neighbors at once might attract unwanted attention.)
The existing analyses of PUSH-PULL in the telephone model, 
which depend on the ability of nodes to accept multiple incoming connections,
do not carry over to this bounded connection setting.

\paragraph{The Mobile Telephone Model.}
In this paper, we formalize our conference hall scenario with a variant of the telephone model
 we call the {\em mobile telephone model.}
 Our new model differs from the classical version in that it now limits 
 each node to participate in at most one connection per round.
 We also introduce two new parameterized properties.
 The first is {\em stability}, which is described with an integer $\tau > 0$.
For a given $\tau$, the network topology must remain stable for intervals of at least $\tau$ rounds before changing.
The second property is {\em tag length}, which is described with an integer $b \geq 0$.
For a given $b$, at the beginning of each round, each node is allowed to publish an {\em advertisement}
containing $b$ bits that is visible to its neighbors.
Notice, for $\tau = \infty$ and $b=0$, the mobile telephone model exactly describes the
conference hall scenario that opened this paper.

Our true motivation for introducing this model, of course,
is not just to facilitate covert cavorting at conferences.
We believe it fits many emerging peer-to-peer
communication technologies better than the classical telephone model.
In particular, in the massively important space of mobile wireless devices (e.g., smartphones, tablets, networked vehicles, sensors),
 standards such as Bluetooth LE, WiFi Direct, and the Apple Multipeer Connectivity Framework,
all depend on a {\em scan-and-connect} architecture in which devices scan for nearby devices before attempting to initiate
a reliable unicast connection with a single neighbor. This architecture does not support a given device concurrently connecting with many nearby devices.
Furthermore, this scanning behavior enables the possibility of devices adding a small number of advertisement bits to their publicly
visible identifiers (as we capture with our tag length parameter), and mobility is fundamental (as we capture with our graph stability parameter).


\paragraph{Results.}
In this paper, we study rumor spreading 
in the mobile telephone model under different assumptions regarding the connectivity properties
of the graph as well as the values of model parameters $\tau$ and $b$. All upper bound results described
below hold with high probability in the network size $n$.

We begin, in Section~\ref{sec:prop}, by studying whether $\alpha$ and $\phi$ still provide useful upper bounds
on the efficiency of rumor spreading once we move from the classical to mobile telephone model.
We first prove that offline optimal rumor spreading terminates in $O((1/\alpha)\log{n})$ rounds
in the mobile telephone model in any graph with vertex expansion $\alpha$.
It follows that it is {\em possible}, from a graph theory perspective,
 for a simple distributed rumor spreading algorithm in the mobile telephone model
to match the performance of PUSH-PULL in the classical telephone model. (The question of whether simple strategies
{\em do} match this optimal bound is explored later in the paper.)
At the core of this analysis are two ideas:
(1) the size of a maximum matching bridging a set of informed and uninformed nodes at a given round describes the
maximum number of new nodes that can be informed in that round; and
(2) we can, crucially, bound the size of these matchings with respect to the vertex expansion of the graph.
We later leverage both ideas in our upper bound analysis.

We then consider graph conductance and uncover a negative answer.
In particular, we prove that offline optimal rumor spreading terminates in $O(\frac{\Delta}{\delta\cdot \phi}\log{n})$ rounds
in graphs with conductance $\phi$, maximum degree $\Delta$, and minimum degree $\delta$.
We also prove that there exist graphs where $\Omega(\frac{\Delta}{\delta\cdot \phi})$ rounds are required.
These results stand in contrast to the potentially much smaller upper bound of $O((1/\phi)\log{n})$ for PUSH-PULL in the classical
telephone model.
In other words, once we shift from the classical to mobile telephone model,
conductance no longer provides a useful upper bound on rumor spreading time.

In Section~\ref{sec:b0}, we turn our attention to studying the behavior of the PUSH-PULL algorithm in 
the mobile telephone model with $b=0$ and $\tau=\infty$.\footnote{As we detail in Section~\ref{sec:b0}, there
are several natural modifications we must make to PUSH-PULL for it to operate as intended under the new
assumptions of the mobile telephone model.}
Our goal is to determine whether this standard strategy approaches the optimal bounds from Section~\ref{sec:prop}.
For the case of vertex expansion, we provide a negative answer by constructing a graph
with constant vertex expansion $\alpha$ in which PUSH-PULL requires $\Omega(\sqrt{n})$ rounds to terminate.
Whether there exists {\em any} distributed rumor spreading algorithm that can approach optimal bounds with respect
to vertex expansion under these assumptions, however, remains an intriguing open question. 
For the case of graph conductance,
we note that a consequence of a result from~\cite{daum2015rumor} is that PUSH-PULL in this setting comes within a $\log$
factor of the (slow) $O(\frac{\Delta}{\delta\cdot \phi}\log{n})$ optimal bound proved in Section~\ref{sec:prop}.
In other words, in the mobile telephone model rumor spreading might be slow with respect to a graph's conductance, 
but PUSH-PULL matches this slow spreading time.

Finally, in Section~\ref{sec:b1},
we study PUSH-PULL in the mobile telephone model with $b=1$.
In more detail, we study the natural variant of PUSH-PULL in this setting
in which nodes use their $1$-bit tag to advertise at the beginning of each round
whether or not they are informed. We assume that informed nodes
select a neighbor in each round uniformly from the set of their uninformed neighbors (if any).
We call this variant {\em productive PUSH} (PPUSH) as nodes only attempt to 
push the rumor toward nodes that still need the rumor.

Notice, in the classical telephone model, the ability to advertise your informed status
trivializes rumor spreading as it allows nodes to implement a basic flood (uninformed nodes pull
only from informed neighbors)---which is clearly optimal. In the mobile telephone model, by contrast,
the power of $b=1$ is not obvious: a given informed node can only communicate with (at most) a single
uninformed neighbor per round, and it cannot tell in advance which such neighbor might be most
useful to inform.

Our primary result in this section,
which provides the primary upper bound contribution of this paper,
is the following: in the mobile telephone model with $b=1$ and stability
parameter $\tau \geq 1$,
PPUSH terminates
in $O((1/\alpha)\Delta^{\frac{1}{r}}r\log^3{n})$ rounds, where $r = \min\{\tau, \log{\Delta}\}$.
In other words, for $\tau \geq \log{\Delta}$, PPUSH terminates in $O((1/\alpha)\log^4{n})$ rounds,
matching (within log factors) the performance of the optimal algorithm in the mobile telephone model
{\em and} the performance of PUSH-PULL in the classical telephone model.
An interesting implication of this result is that the power gained by allowing nodes to advertise whether or not
they know the rumor outweighs the power lost by limiting nodes to a single connection per round.

As the stability of the graph decreases from $\tau = \log{\Delta}$ toward $\tau = 1$,
the performance of PPUSH is degraded by a factor of $\Delta^{1/\tau}$.
 At the core of this result is a novel analysis of randomized approximate
distributed maximal matchings in bipartite graphs,
which we combine with the results from Section~\ref{sec:prop}
to connect the approximate matchings generated by our algorithm to the graph vertex expansion.
We note that it is not {\em a priori} obvious that mobility makes rumor spreading more difficult.
It remains an open question, therefore, as to whether this $\Delta^{1/\tau}$ factor is an artifact
of our analysis or a reflection of something fundamental about changing topologies.


\paragraph{Returning to the Conference Hall.}
The PPUSH algorithm enables us to tackle the question that opens the paper: {\em What is a good way to discreetly spread a rumor in a crowd? }
One answer, we now know,
goes as follows.
If you know the rumor,
randomly choose a nearby member that does not know the rumor and attempt to whisper it
in their ear. When you do, also instruct them to make some visible sign to
indicate to their neighborhood that they are now informed; e.g., ``turn your conference badge upside down".
(This signal can be agreed upon in advance or decided by the source and spread along with the rumor.)
This simple strategy---which effectively implements PPUSH in the conference hall---will spread
the rumor fast with respect to the crowd topology's vertex expansion,
and it will do so in a way that copes elegantly and automatically to any level of encountered topology changes.
More practically speaking, we argue that in the new world of mobile peer-to-peer networking,
something like PPUSH is probably the right primitive to use to spread information
efficiently through an unknown and potentially changing network.

\section{Related Work}
\label{sec:related}
The telephone model described above was first introduced by Frieze and Grimmett~\cite{frieze1985shortest}.
A key problem in this model is {\em rumor spreading}: a rumor must spread from a single source
to the whole network.
In studying this problem, algorithmic simplicity is typically prioritized over absolute optimality. 
The PUSH algorithm (first mentioned~\cite{frieze1985shortest}), for example, simply has 
every node with the message choose a neighbor with uniform randomness and send it the message.
 The PULL algorithm (first mentioned~\cite{clarkson1992epidemic}), by contrast, has every node 
 without the message choose a neighbor with uniform randomness and 
 ask for the message. The PUSH-PULL algorithm combines those two strategies.
In a complete graph, both PUSH and PULL complete in $O(\log{n})$ rounds, with high probability---leveraging
 epidemic-style spreading behavior. Karp et~al.~\cite{karp2000randomized} proved that the
average number of connections per node when running PUSH-PULL in the complete graph is bounded at  $\Theta(\log\log{n})$.

In recent years, attention has turned toward studying the performance of PUSH-PULL with respect
to graph properties describing the connectedness or expansion characteristics of the graph.
One such measure is {\em graph conductance}, denoted $\phi$, which captures, roughly speaking,
how well-knit together is a given graph. A series of papers produced increasingly refined results with respect
to $\phi$, culminating in the 2011 work of Giakkoupis~\cite{giakkoupis2011tight} which established that
PUSH-PULL terminates in $O((1/\phi)\log{n})$ rounds with high probability in graphs with conductance $\phi$.
This bound is tight in the sense that there exist graphs with this diameter and conductance $\phi$.
Around this same time, Chierichetti et~al.~\cite{chierichetti2010rumour} motivated and initiated
the study of PUSH-PULL with respect to the graphs vertex expansion number, $\alpha$,
which measures its expansion characteristics.
Follow-up work by Giakkoupis and Sauerwald~\cite{giakkoupis2012rumor} proved that there exist
graphs with expansion $\alpha$ where $\Omega((1/\alpha)\log^2{n})$ rounds are necessary for PUSH-PULL to terminate,
and that PUSH alone achieves this time in regular graphs.
Fountoulakis et al.~\cite{fountoulakis2010rumor} proved that PUSH performs 
better---in this case, $O((1/\alpha)\log{n})$ rounds---given even stronger expansion properties.
A 2014 paper by Giakkoupis~\cite{giakkoupis2014tight} proved a matching bound of $O((1/\alpha)\log^2{n})$ for
PUSH-PULL in any graph with expansion $\alpha$.

Recent work by Daum et al.~\cite{daum2015rumor} emphasized the shortcoming of the telephone model mentioned
above: it allows a single node to accept an unlimited number of incoming connections.
They study a restricted model in which each node can only accept a single connection per round.
We emphasize that the mobile telephone model with $b=0$ and $\tau = \infty$ is equivalent to the
model of~\cite{daum2015rumor}.\footnote{There are some technicalities in this statement. A key property
of the model from~~\cite{daum2015rumor}  is how concurrent connection attempts are resolved.
They study the case where the successful connection is chosen randomly and the case where it
is chosen by an adversary. In our model,  we assume the harder case of multiple connections being resolved arbitrarily.}
This existing work proves the existence of graphs where PULL works in polylogarithmic time in the classical telephone model
but requires $\Omega(\sqrt{n})$ rounds in their bounded variation. 
 They also prove that in any graph with maximum degree $\Delta$ and minimum degree $\delta$,
 PUSH-PULL completes in  $O(\mathcal{T} \cdot \frac{\Delta}{\delta}\cdot \log{n})$ rounds, where $\mathcal{T}$ is the performance of PUSH-PULL in the classical telephone model.
 Our work picks up where~\cite{daum2015rumor} leaves off by: (1) studying the relationship between rumor spreading and 
 graph properties such as $\alpha$ and $\phi$ under the assumption of bounded connections; (2) leveraging small advertisement
 tags to identify simple strategies that close the gap with the classical telephone model results; and (3) considering the impact of 
 topology changes.
 
 Finally, from a centralized perspective, Baumann et al.~\cite{baumann2014worst} proved that in a model similar to the mobile telephone
 model with $b=1$ and $\tau = \infty$ (i.e., a model where you can only connect with a single neighbor per round
 but can learn the informed status of all neighbors in every round) there exists no
 PTAS for computing the worst-case rumor spreading time for a PUSH-PULL style strategy in a given graph. 
  
\section{Preliminaries}
\label{sec:prelim}

We will model a network topology with a connected undirected graph $G=(V,E)$.
For each $u\in V$, we use $N(u)$ to describe $u$'s neighbors and $N^+(u)$ to describe $N(u) \cup \{u\}$.
We define $\Delta = \max_{u\in V}\{| N(u)|\}$ and $\delta = \min_{u\in V}\{| N(u)|\}$.
For a given node $u\in V$, define $d(u) = |N(u)|$.
For given set $S \subseteq V$, define $vol(S) = \sum_{u\in S} d(u)$
and let $cut(S, V\setminus S)$ describe the number of edges with one endpoint in $S$ and one endpoint in $V \setminus S$.
As in~\cite{giakkoupis2011tight}, we define the {\em graph conductance } $\phi$ of a given graph $G=(V,E)$
as follows:

 \[ \phi = \min_{S \subseteq V, 0 < vol(S) \leq vol(V)/2} \frac{cut(S, V\setminus S)}{vol(S)}.\]
 
  For a given $S \subseteq V$, define the {\em boundary} of $S$, indicated $\partial S$, as follows:
 $\partial S = \{ v\in V \setminus S : N(v) \cap S \neq \emptyset\}$: that is, $\partial S$ is the set
 of nodes not in $S$ that are directly connected to $S$ by an edge.
 We define $\alpha(S) = |\partial S|/|S|$.
As in~\cite{giakkoupis2014tight}, we define the {\em vertex expansion} $\alpha$ of a given graph $G = (V,E)$
 as follows:
 
 \[  \alpha = \min_{S \subset V, 0 < |S| \leq n/2} \alpha(S). \]
 
  Notice that, despite the possibility of $\alpha(S) >1$ for some $S$, we always have $\alpha \in [0,1]$.
 Our model defined below sometimes considers a {\em dynamic graph} which can change
 from round to round. Formally, a dynamic graph ${\cal G}$ is a sequence
 of static graphs, $G_1 = (V,E_1), G_2 = (V,E_2), ...$.
 When using a dynamic graph ${\cal G}$ to describe a network topology,
 we assume the $r^{th}$ graph in the sequence describes the topology during round $r$.
 We define the vertex expansion of a dynamic graph ${\cal G}$ to be the minimum
 vertex expansion over all of ${\cal G}$'s constituent static graphs, and
the graph conductance of ${\cal G}$ to be the minimum graph conductance
 over ${\cal G}$'s static graphs. Similarly, we define the maximum and minimum degree
 of a dynamic graph to be the maximum and minimum degrees defined over all its static graphs.
 
Finally, we state a pair of well-known inequalities that will prove useful in several places below:
  
 \begin{fact}
For $p \in [0, 1]$, we have $(1-p) \leq e^{-p}$ and $(1+p) \geq 2^p$.
 \label{fact:prob}
 \end{fact}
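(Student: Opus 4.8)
The plan is to treat the two inequalities separately, in each case reducing the claim to a standard convexity observation about the exponential function rather than to any ad hoc computation.

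For the first inequality, I would appeal to the tangent-line bound $e^x \geq 1 + x$, which holds for every real $x$ because $e^x$ is convex and $y = 1 + x$ is precisely its tangent line at $x = 0$ (slope $e^0 = 1$, passing through $(0,1)$). Substituting $x = -p$ gives $e^{-p} \geq 1 - p$ immediately; note that this argument imposes no restriction on $p$ and in fact holds for all reals, so the hypothesis $p \in [0,1]$ is stronger than needed here. If a fully self-contained derivation is preferred, I would instead set $f(p) = e^{-p} - (1 - p)$, observe that $f(0) = 0$ and $f'(p) = 1 - e^{-p} \geq 0$ for $p \geq 0$, and conclude that $f$ is nondecreasing and hence nonnegative on $[0,1]$.

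For the second inequality the restriction $p \in [0,1]$ is genuinely essential, so the argument must use it: the bound already fails at $p = 2$, where $1 + p = 3 < 4 = 2^p$. The clean approach is to exploit the convexity of $g(p) = 2^p$, together with the fact that a convex function lies below each of its chords. The chord joining the endpoints $(0, 2^0) = (0,1)$ and $(1, 2^1) = (1,2)$ is exactly the line $1 + p$, so convexity yields $2^p \leq 1 + p$ throughout $[0,1]$. Equivalently, I would define $h(p) = 1 + p - 2^p$, note that $h(0) = h(1) = 0$ and $h''(p) = -(\ln 2)^2\, 2^p < 0$, so that $h$ is concave and therefore nonnegative between its two roots.

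There is no real obstacle here, as both statements are elementary; the only point worth flagging is that the two inequalities are asymmetric in character. The first is a global bound valid for all $p$, whereas the second relies crucially on the interval $[0,1]$, and any correct proof of it must invoke that the chord of the convex function $2^p$ over this interval coincides with $1 + p$.
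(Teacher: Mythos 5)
Your proof is correct. Note that the paper itself offers no proof of this statement: it is presented as a pair of well-known inequalities (Fact~\ref{fact:prob}) and simply invoked later, so there is no authorial argument to compare against. Your two convexity arguments---the tangent-line bound $e^x \geq 1+x$ specialized at $x=-p$, and the chord bound for the convex function $2^p$ over $[0,1]$---are the standard complete justifications, and your observation that the second inequality genuinely needs $p \in [0,1]$ while the first does not is accurate.
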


\section{Model and Problem}
\label{sec:model}

We introduce a variation
of the classical telephone model we call the {\em mobile telephone model}.
This model describes a network topology in each round as an undirected connected graph 
$G=(V,E)$. We assume a computational process (called a {\em node}) is assigned to each vertex in $V$.
Time proceeds in synchronized rounds. 
At the beginning of each round, we assume each node $u$ knows its neighbor set $N(u)$.
Node $u$ can then select at most one node from $N(u)$ and send a connection proposal.
A node that sends a proposal cannot also receive a proposal.
However, if a node $v$ does not send a proposal, and at least one neighbor sends a proposal
to $v$, then $v$ 
can select at most one incoming proposal to accept.
(A slightly stronger variation of this model is that the accepted proposal is selected arbitrarily
by an adversarial process and not by $v$. Our algorithms work for this strong variation
and our lower bounds hold for the weaker variation.)
If node $v$ accepts a proposal from node $u$,
the two nodes are {\em connected} and can perform an unbounded amount of communication in that round.

We parameterize the mobile telephone model with two integers, $b\geq 0$
and $\tau \geq 1$. If $b>0$,
then we allow each node to select a {\em tag} containing $b$ bits to advertise at the beginning
of each round. That is, if node $u$ chooses tag $b_u$ at the beginning of a round,
all neighbors of $u$  learn $b_u$ before making their connection decisions in this round.
We also allow for the possibility of the network topology changing, which we formalize
by describing the network topology with a dynamic graph ${\cal G}$. 
We bound the allowable changes in ${\cal G}$ with a {\em stability} parameter $\tau$.
For a given $\tau$, 
${\cal G}$ must satisfy the property that we can partition it into intervals of length $\tau$,
such that all $\tau$ static graphs in each interval are the same.\footnote{Our algorithms work for many different natural notions of stability. For example,
it is sufficient to guarantee that in each such interval the graph is stable with constant probability,
or that given a constant number of such intervals, at least one contains no changes, etc.
The definition used here was selected for analytical simplicity.}
For $\tau=1$, the graph can change every round.
We use the convention of stating $\tau=\infty$ to indicate the graph never changes.

In the mobile telephone model we study the {\em rumor spreading problem}, defined as follows:
A single distinguished source begins with a {\em rumor} and the problem is solved once all nodes learn the rumor.

\section{Rumor Spreading with Respect to Graph Properties}
\label{sec:prop}

As summarized above,
a series of recent papers established that in the classical telephone model
 PUSH-PULL terminates with high probability
in $\Theta((1/\alpha)\log^2{n})$ rounds in graphs with vertex expansion $\alpha$,
and in $\Theta((1/\phi)\log{n})$ rounds in graphs with graph conductance $\phi$.
The question we investigate here is the relationship between $\alpha$ and $\phi$
and the optimal offline rumor spreading time in the mobile telephone model.
That is, we ask: once we bound connections, do $\alpha$ and $\phi$ still provide a good indicator
of how fast a rumor can spread in a graph?

\subsection{Optimal Rumor Spreading for a Given Vertex Expansion}

Our goal in this section is to prove the following property regarding optimal rumor spreading
in our model and its relationship to the graph's vertex expansion:

\begin{theorem}
Fix some connected graph $G$ with vertex expansion $\alpha$. 
The optimal rumor spreading algorithm terminates in $O((1/\alpha)\log{n})$ rounds
in $G$ in the mobile telephone model. 
\label{thm:alpha}
\end{theorem}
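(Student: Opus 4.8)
The plan is to show that an offline optimal strategy can always inform a $\Theta(\alpha)$-fraction of the ``smaller side'' of the informed/uninformed partition in a single round, and that this geometric progress engulfs the whole graph within $O((1/\alpha)\log n)$ rounds. The pivotal observation (idea~(1) flagged in the introduction) is that in any single round the set of realized connections forms a matching in $G$, so the number of newly informed nodes equals the number of cut edges used, which is at most the size $\nu$ of a maximum matching in the bipartite graph of edges crossing the current informed/uninformed cut. The offline optimal algorithm can realize any such matching: orient each matching edge from its informed endpoint, which sends the proposal, to its uninformed endpoint, which accepts; since no vertex repeats, the one-connection-per-round constraint is met. Hence it suffices to lower bound $\nu$ in terms of $\alpha$.

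The main lemma I would prove is: if $I \subsetneq V$ is the informed set, $U = V \setminus I$, and $H$ is the bipartite graph on $(I,U)$ consisting of the cut edges, then the maximum matching $\nu$ of $H$ satisfies $\nu \ge \frac{\alpha}{1+\alpha}\min\{|I|,|U|\}$, applied on whichever side has size at most $n/2$. I would prove this using K\"onig's theorem. Let $C = C_I \cup C_U$ be a minimum vertex cover of $H$, so $|C| = \nu$ with $C_I \subseteq I$ and $C_U \subseteq U$. Assume first $|I| \le n/2$ and set $S = I \setminus C_I$. Every cut edge leaving $S$ is uncovered on the $I$-side, so its $U$-endpoint lies in $C_U$; since the only other neighbors of $S$ lie in $I \setminus S = C_I$, this gives $\partial S \subseteq C_I \cup C_U = C$, hence $|\partial S| \le \nu$. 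Applying the vertex-expansion bound $|\partial S| \ge \alpha|S|$ (the case $S=\emptyset$ is immediate, as then $\nu \ge |C_I| = |I|$) together with $|S| = |I| - |C_I| \ge |I| - \nu$ yields $\alpha(|I|-\nu) \le \nu$, i.e. $\nu \ge \frac{\alpha}{1+\alpha}|I| \ge \frac{\alpha}{2}|I|$. The case $|U| \le n/2$ is symmetric, taking $S = U \setminus C_U$.

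With the lemma in hand the two-phase counting argument is routine. In the growth phase, while $|I| \le n/2$, each round informs at least $\frac{\alpha}{1+\alpha}|I| \ge \frac{\alpha}{2}|I|$ new nodes, so $|I|$ grows by a factor of at least $(1+\frac{\alpha}{2})$ per round; starting from $|I|=1$ this reaches $n/2$ within $O((1/\alpha)\log n)$ rounds, using $\ln(1+\alpha/2) = \Omega(\alpha)$. In the completion phase, once $|U| < n/2$, the symmetric side of the lemma lets each round inform at least $\frac{\alpha}{2}|U|$ of the remaining uninformed nodes, so $|U|$ shrinks by a factor of at most $(1-\frac{\alpha}{2})$ per round and drops below $1$ within a further $O((1/\alpha)\log n)$ rounds. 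Summing the two phases gives the claimed bound.

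The step I expect to be the crux is the matching lemma, and specifically getting the K\"onig/vertex-cover bookkeeping right so that the set $S$ to which we apply vertex expansion is guaranteed to satisfy $|S| \le n/2$ and to have its entire graph-boundary $\partial S$ contained in the cover $C$. Everything else is a standard geometric-growth calculation. The reason the argument must split into a growth phase (analyzing $I$) and a completion phase (analyzing $U$) at the $n/2$ threshold is precisely to keep the analyzed set on the small side, which is what the definition of $\alpha$ requires.
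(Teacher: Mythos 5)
Your proof is correct, and its skeleton---the observation that one round of connections forms a matching the offline algorithm can realize across the informed/uninformed cut, a crux lemma lower-bounding that maximum matching by $\Omega(\alpha)\cdot\min\{|I|,|U|\}$, and a two-phase geometric growth/shrinkage argument---is exactly the paper's decomposition (Lemmas~\ref{lem:match}, \ref{lem:msize}, and~\ref{lem:gamma}, respectively). Where you genuinely diverge is in how the crux lemma is proved. The paper argues directly from maximality: it splits into cases according to whether the maximum matching $M$ has size $m \geq |S|/2$ or not, and in the interesting case deletes the matched endpoints $M_S$ from the informed side, observing that any boundary vertex of $S' = S \setminus M_S$ must itself be an endpoint of $M$ (otherwise $M$ could be extended by one more edge), so $|\partial S'| \leq 2m$ and hence $\alpha \leq \alpha(S') < 4m/|S|$, giving the constant $\alpha/4$. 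You instead invoke K\"onig's theorem: taking a minimum vertex cover $C = C_I \cup C_U$ of size $\nu$ and setting $S = I \setminus C_I$, every boundary vertex of $S$ lies in $C$ (cut edges leaving $S$ are uncovered on the $I$-side, so covered in $C_U$; internal neighbors lie in $C_I$), so $|\partial S| \leq \nu$, and one application of vertex expansion yields $\nu \geq \frac{\alpha}{1+\alpha}|I|$ with no case analysis. The two arguments are cousins---both delete a small ``blocking'' set from the informed side and apply expansion to the remainder, with the same $|S| \leq n/2$ bookkeeping handled by always analyzing the smaller side of the cut---but yours buys a cleaner single-case proof and a slightly better constant ($\alpha/2$ versus $\alpha/4$), at the price of invoking K\"onig duality where the paper needs only the elementary fact that a maximum matching admits no augmenting edge of length one. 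Your realizability argument (orient each matching edge from informed proposer to uninformed acceptor) also makes explicit what the paper's Lemma~\ref{lem:match} leaves implicit, which is a small plus.
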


In other words, 
it is at least theoretically
possible to spread a rumor in the mobile telephone model
as fast (with respect to $\alpha$) as PUSH-PULL in the easier classical telephone model. 
In the analysis below, assume a fixed connected graph $G=(V,E)$ with vertex expansion $\alpha$ and $|V| =n$.

\paragraph{Connecting Maximum Matchings to Rumor Spreading.}
The core difference between our model and the classical telephone model
is that now each node can only participate in at most one connection per round.
Unlike in the classical telephone model, therefore,
the set of connections in a given round must describe a matching. 
To make this more concrete, we first define some notation.
In particular, given some $S\subset V$,
let $B(S)$ be the bipartite graph with bipartitions $(S,V \setminus S)$
and the edge set $E_S = \{ (u,v): (u,v) \in E$, $u\in S$, and $v\in V \setminus S\}$.
Also recall that the {\em edge independence number} of a graph $H$,
denoted $\nu(H)$, describes the maximum matching on $H$.
We can now formalize our above claim as follows:

\begin{lemma}
Fix some $S \subset V$. The maximum number of concurrent connections
between nodes in $S$ and $V\setminus S$ in a single round is $\nu(B(S))$.
\label{lem:match}
\end{lemma}

We can connect the smallest such maximum matchings in our graph $G$
to the optimal rumor spreading time. 
Our proof of the following lemma
combines the connection between matchings and rumor spreading captured in Lemma~\ref{lem:match},
with the same high-level analysis structure deployed in  existing studies of rumor
spreading and vertex expansion in the classical telephone model (e.g.,~\cite{giakkoupis2014tight}):

\begin{lemma}
Let $\gamma = \min_{S\subset V, |S| \leq n/2}\{ \nu(B(S))/|S|  \}$.
It follows that optimal rumor spreading in $G$ terminates in $O((1/\gamma)\log{n})$ rounds.
\label{lem:gamma}
\end{lemma}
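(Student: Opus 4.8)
The plan is to run a standard two-phase potential argument, where the potential is the size of the informed set in the first phase and the size of the \emph{uninformed} set in the second. The engine driving both phases is Lemma~\ref{lem:match}: since an optimal offline algorithm can coordinate connection proposals freely, in any round with informed set $I$ and uninformed set $U = V\setminus I$ it can realize a maximum matching in $B(I)$, and every edge of such a matching has exactly one endpoint in $I$ and one in $U$, hence informs exactly one new node. Therefore the number of nodes newly informed in a single round can be taken to be $\nu(B(I))$. Note also that connectivity of $G$ guarantees every cut is crossed, so $\gamma > 0$ and the bound $O((1/\gamma)\log n)$ is meaningful.

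First I would handle the growth phase, which runs while $|I| \le n/2$. Here the definition of $\gamma$ applies directly with $S = I$, giving $\nu(B(I)) \ge \gamma |I|$, so the optimal algorithm can grow the informed set from $|I|$ to at least $(1+\gamma)|I|$ in one round. Starting from the single source, after $k$ rounds we thus have $|I| \ge (1+\gamma)^k \ge 2^{\gamma k}$, using the inequality $(1+p)\ge 2^p$ from Fact~\ref{fact:prob}. Requiring $2^{\gamma k} > n/2$ shows that after $k = O((1/\gamma)\log n)$ rounds we reach $|I| > n/2$.

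Next comes the shrinking phase, which begins once $|I| > n/2$, i.e. once $|U| < n/2$. The subtlety here is that the definition of $\gamma$ is stated only for sets of size at most $n/2$, so I cannot apply it to $I$. Instead I apply it to the small set $U$ and exploit the symmetry $\nu(B(U)) = \nu(B(I))$, since $B(U)$ and $B(I)$ are the same bipartite graph with its two sides relabeled. This yields $\nu(B(I)) = \nu(B(U)) \ge \gamma|U|$, so in each round the optimal algorithm can inform at least $\gamma|U|$ of the remaining uninformed nodes, shrinking $|U|$ to at most $(1-\gamma)|U|$. Using $(1-p)\le e^{-p}$ from Fact~\ref{fact:prob}, after $k'$ further rounds $|U| \le (n/2)e^{-\gamma k'}$, which drops below $1$ (hence to $0$) once $k' = O((1/\gamma)\log n)$. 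Summing the two phases gives the claimed $O((1/\gamma)\log n)$ bound.

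The main thing to get right is the symmetry argument bridging the two phases: because $\gamma$ constrains only small sets, the growth phase must track $I$ while the shrinking phase must be reformulated in terms of $U$, and the identity $\nu(B(S)) = \nu(B(V\setminus S))$ is precisely what lets a single parameter $\gamma$ govern both regimes. Everything else is routine geometric-series bookkeeping, and working with an optimal offline algorithm sidesteps any concern about whether a \emph{distributed} strategy could actually compute and realize these maximum matchings round by round.
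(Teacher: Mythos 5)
Your proof is correct and follows essentially the same route as the paper: a two-phase argument that grows the informed set by a $(1+\gamma)$ factor per round while $|I|\leq n/2$, then shrinks the uninformed set by a $(1-\gamma)$ factor, using Fact~\ref{fact:prob} in both phases. The only difference is that you make explicit the symmetry $\nu(B(S)) = \nu(B(V\setminus S))$ needed to apply $\gamma$ in the second phase, which the paper leaves implicit when it asserts its ``Case 2''; this is a nice clarification but not a different argument.
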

\begin{proof} 
Assume some subset $S\subset V$ know the rumor.
Combining Lemma~\ref{lem:match} with the definition of $\gamma$,
it follows that: (1) if $|S| \leq n/2$, then at least $\gamma|S|$ new nodes can learn the rumor in the next round;
and (2) if $|S| \geq n/2$, then at least $\gamma |V\setminus S|$ new nodes can learn the rumor.

So long as Case $1$ holds, the number of informed nodes grows by at least a factor of $(1+\gamma)$
in each round. By Fact~\ref{fact:prob}, after $t$ rounds, the number of informed nodes has
grown to at least $1\cdot(1 + \gamma)^t \geq 2^{\gamma \cdot t}$. Therefore, after at most $t_1 = (1/\gamma)\log{(n/2)}$
rounds, the set of informed nodes is of size at least $n/2$.

At this point we can start applying Case $2$ to the shrinking set of uninformed nodes. 
Again by Fact~\ref{fact:prob}, after $t_2$ additional rounds, the number of uninformed nodes has
reduced to at most $(n/2)\cdot(1 - \gamma)^t \leq e^{- \gamma \cdot t}$. Therefore, after at most $t_2 = \Theta((1/\gamma)\ln{n})$
rounds, the set of uninformed nodes is reduced to a constant. After this point, a constant number of additional rounds 
is sufficient to complete rumor spreading. 
It follows that $t_1 + t_2 = O((1/\gamma)\log{n})$ rounds is enough to solve the problem.
\end{proof}

\paragraph{Connecting Maximum Matching Sizes to Vertex Expansion.}
Given Lemma~\ref{lem:gamma},
to connect rumor spreading time to vertex expansion in our mobile telephone model,
it is sufficient to bound maximum matching sizes with respect to $\alpha$.
In particular, we will now argue that $\gamma \geq \alpha/4$ (the details of this constant factor do not matter much; $4$ happened
to be convenient for the below argument).
Theorem~\ref{thm:alpha} follows directly from the below result combined with
Lemma~\ref{lem:gamma}.

\begin{lemma}
Let $\gamma = \min_{S\subset V, |S| \leq n/2}\{ \nu(B(S))/|S|  \}$.
It follows that $\gamma \geq \alpha/4$.
\label{lem:msize}
\end{lemma}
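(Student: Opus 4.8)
The plan is to prove the stronger pointwise statement that $\nu(B(S)) \ge (\alpha/4)\,|S|$ for \emph{every} $S \subset V$ with $0 < |S| \le n/2$; taking the minimum over all such $S$ then yields $\gamma \ge \alpha/4$ immediately. So fix such an $S$, let $M$ be a maximum matching of $B(S)$ (so $|M| = \nu(B(S)) =: \nu$), and let $U \subseteq S$ be the set of vertices of $S$ left unmatched by $M$. Since every matching edge uses exactly one vertex of $S$, we have $|U| = |S| - \nu$, and if $U = \emptyset$ the claim is trivial (then $\nu = |S| \ge (\alpha/4)|S|$), so I may assume $U \ne \emptyset$.

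The heart of the argument is to bound the boundary $\partial U$ of the unmatched set from above by $2\nu$, and then confront this with the expansion lower bound $|\partial U| \ge \alpha\,|U|$. I would split $\partial U$ --- the neighbors of $U$ lying outside $U$ --- into its \emph{internal} part (neighbors in $S \setminus U$) and its \emph{external} part (neighbors in $V \setminus S$). The internal part is contained in $S \setminus U$, the set of matched vertices of $S$, which has size exactly $\nu$. For the external part, note that the neighbors of $U$ in $V \setminus S$ are precisely the $B(S)$-neighbors of $U$, and maximality of $M$ forces each of them to be matched: an unmatched vertex of $U$ adjacent to an unmatched vertex of $\partial S$ would yield an augmenting edge. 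Hence the external part is contained in the matched vertices of $\partial S$, again a set of size $\nu$. Combining, $|\partial U| \le 2\nu$.

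Now I would apply the vertex-expansion guarantee to $U$ itself: since $0 < |U| \le |S| \le n/2$, the definition of $\alpha$ gives $|\partial U| \ge \alpha\,|U| = \alpha(|S| - \nu)$. Chaining the two bounds on $|\partial U|$ yields $2\nu \ge \alpha(|S| - \nu)$, i.e.\ $\nu \ge \tfrac{\alpha}{2+\alpha}|S|$, and since $\alpha \le 1$ this is at least $(\alpha/4)|S|$, as desired.

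The main obstacle --- and essentially the only place where care is needed --- is the treatment of the \emph{internal} boundary of $U$: unlike in the classical telephone-model analyses, $U$ sits inside $S$, so $\partial U$ can reach back into $S$ rather than only into $V \setminus S$, and a naive attempt to apply expansion directly to the bipartite neighborhood $N(U) \cap \partial S$ would miss this contribution. The resolution is the observation that the internal boundary is automatically controlled by the number of matched $S$-vertices (namely $\nu$), so that both halves of $\partial U$ are simultaneously bounded by $\nu$; everything else is a one-line manipulation of the resulting inequality.
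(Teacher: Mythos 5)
Your proof is correct and follows essentially the same route as the paper's: your unmatched set $U$ is exactly the paper's $S' = S \setminus M_S$, and your key bound $|\partial U| \le 2\nu$ (the boundary of the unmatched part is contained in the endpoints of the maximum matching, by the augmenting-edge argument) is precisely the paper's $|\partial S'| \le 2m$. The only difference is the finish: the paper splits into the cases $m \ge |S|/2$ and $m < |S|/2$ so that $|S'| \ge |S|/2$ before invoking expansion, whereas you apply expansion to $U$ directly and solve $2\nu \ge \alpha(|S| - \nu)$, which avoids the case split and even yields the slightly sharper constant $\alpha/(2+\alpha) \ge \alpha/3$.
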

\begin{proof}
We can restate the lemma equivalently as follows:
{\em for every $S \subset V$, $|S| \leq n/2$, the maximum matching on $B(S)$
is of size at least $(\alpha |S|)/4$.}
We will prove this equivalent formulation.

To start, fix some arbitrary subset $S\subset V$ such that $|S| \leq n/2$.
Let $m$ be the size of a maximum matching on $B(S)$.
Recall that $\alpha \leq \alpha(S) = |\partial S|/|S|$.
Therefore, if we could show that $|\partial S| \leq 4m$, we would be done.
Unfortunately, it is easy to show that this is not always the case. Consider a partition $S$ in which a single node $u\in S$
is connected to large number of nodes in $V \setminus S$, and these are the only edges leaving $S$.
The vertex expansion in this example is large while the maximum matching size is only $1$ (as all nodes in $\partial S$
share $u$ as an endpoint).
To overcome this problem, we will, in some instances,
instead consider a related smaller partition $S'$ such that $\alpha(S') \geq \alpha$
is small enough to ensure our needed property.
In more detail, we consider two cases regarding the size of $m$:

{\em The first case} is that $m \geq |S|/2$. By definition, $\alpha \leq 1$.
It follows that $m \geq (|S|\alpha)/2$, which more than satisfies our claim.

{\em The second (and more interesting) case} is that $m  < |S|/2$. Let $M$ be a maximum matching
of size $m$ for $B(S)$. 
Let $M_S$ be the endpoints in $M$ in $S$.
We define a smaller partition $S' = S \setminus M_S$.
Note, by the case assumption, $|S'| \geq |S|/2$.
We now argue that every node in $\partial S'$ is also in $M$.
To see why, assume for contradiction that there exists some $v\in \partial S'$ that is not in $M$.
Because $v\in \partial S'$, there must exist some edge $(u,v)$, where $u\in S'$.
Notice, however, because $u$ is in $S'$ it is not in $M$.
If follows that we could have added $(u,v)$ to our matching $M$ defined on $B(S)$---contradicting
the assumption that $M$ is maximum.
We have established, therefore, that $|\partial S'| \leq 2m$.
It follows:

\[ \alpha \leq \alpha(S') \leq 2m/|S'| = 2m/(|S| - m) \stackrel{(m < |S|/2)}{<} \frac{2m}{|S|/2} = (4m)/|S|, \]

\noindent from which it follows that $\alpha |S| < 4m \Rightarrow m > (\alpha |S|)/4$, as needed to satisfy the claim.
\end{proof}

\subsection{Optimal Rumor Spreading for a Given Graph Conductance}

In the classical telephone model PUSH-PULL terminates in $O((1/\phi)\log{n})$ rounds in a graph
with conductance $\phi$. Here we prove optimal rumor spreading might be much slower in the mobile telephone model.
To establish the intuition for this result, consider a star graph with one center node and $n-1$ points.
It is straightforward to verify that the conductance of this graph is constant.
But it is also easy to verify that at most one point can learn the rumor per round in the mobile telephone
model, due to the restriction that each node (including the center of the star) can only participate in
one connection per round. In this case, every rumor spreading algorithm will be a factor
of $\Omega(n/\log{n})$ slower than PUSH-PULL in the classical telephone model.

Below we formalize a fine-grained version of this result, parameterized with 
maximum and minimum degree of the graph. We then leverage Theorem~\ref{thm:alpha},
and a useful property from~\cite{giakkoupis2014tight}, to prove the result tight.

\begin{theorem}
Fix some integers $\delta, \Delta$, such that $1 \leq \delta \leq \Delta$.
There exists a graph $G$ with minimum degree $\delta$ and maximum degree $\Delta$, such that every rumor spreading algorithm
requires $\Omega(\Delta/(\delta\cdot \phi))$ rounds in the mobile telephone model.
In addition, for every graph with minimum degree $\delta$ and maximum degree
$\Delta$, the optimal rumor spreading algorithm terminates in $O(\Delta/(\delta\cdot \phi)\cdot \log{n})$ rounds
in the mobile telephone model.
\label{thm:phi}
\end{theorem}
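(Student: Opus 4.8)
The plan is to prove the two halves of the theorem by separate, largely self-contained arguments, each of which reduces to a fact already available in the excerpt. For the lower bound I would exhibit a single extremal graph (a complete-bipartite ``funnel'' generalizing the star that precedes the theorem) and bound the per-round progress of \emph{every} algorithm using Lemma~\ref{lem:match}. For the upper bound, which must hold for all graphs with the stated degree bounds, I would avoid reasoning about algorithms directly and instead combine Theorem~\ref{thm:alpha} with an elementary degree-based comparison between vertex expansion $\alpha$ and conductance $\phi$ (this is the ``useful property from~\cite{giakkoupis2014tight}'' alluded to in the text).

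For the lower bound, take $G = K_{\delta,\Delta}$, the complete bipartite graph with sides $A$ and $B$ where $|A| = \delta$ and $|B| = \Delta$. Every vertex of $B$ has degree $\delta$ and every vertex of $A$ has degree $\Delta$, so the minimum and maximum degrees are exactly $\delta$ and $\Delta$, and the graph is connected. A short optimization over bipartite cuts $S = A_1 \cup B_1$ (writing $x=|A_1|$, $y=|B_1|$ and minimizing $1 - \tfrac{2xy}{xb+ya}$ subject to the volume constraint) shows $\phi(G) = \Theta(1)$, so it only remains to lower bound the spreading time by $\Omega(\Delta/\delta)$. The key point is that for \emph{any} informed set $S$, every edge of the bipartite graph $B(S)$ joins $A$ to $B$, so any matching uses a distinct vertex of $A$; hence $\nu(B(S)) \le |A| = \delta$. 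By Lemma~\ref{lem:match} at most $\delta$ new nodes can be informed per round, so starting from one source the number of informed nodes after $t$ rounds is at most $1 + \delta t$, and reaching all $\delta + \Delta$ nodes requires $t \ge (\delta + \Delta - 1)/\delta = \Omega(\Delta/\delta) = \Omega\!\left(\tfrac{\Delta}{\delta \phi}\right)$ rounds. The original star is the special case $\delta = 1$.

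For the upper bound I would prove the single inequality $\alpha \ge \phi\delta/\Delta$, after which Theorem~\ref{thm:alpha} immediately gives optimal spreading time $O\!\left(\tfrac{1}{\alpha}\log n\right) = O\!\left(\tfrac{\Delta}{\delta\phi}\log n\right)$. To prove the inequality, fix any $S \subseteq V$ with $0 < |S| \le n/2$ and bound $\alpha(S) = |\partial S|/|S|$ from below in three steps. First, each crossing edge has its $V\setminus S$ endpoint in $\partial S$, and each node of $\partial S$ is incident to at most $\Delta$ edges, so $cut(S,V\setminus S) \le \Delta\,|\partial S|$. Second, applying the definition of $\phi$ to whichever of $S, V\setminus S$ has the smaller volume gives $cut(S,V\setminus S) \ge \phi\cdot\min\{vol(S), vol(V\setminus S)\}$. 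Third, the degree bounds yield $\min\{vol(S), vol(V\setminus S)\} \ge \delta|S|$: indeed $vol(S)\ge \delta|S|$, and since $|S|\le n/2$ we have $vol(V\setminus S) \ge \delta|V\setminus S| \ge \delta|S|$. Chaining these, $\Delta|\partial S| \ge cut(S,V\setminus S) \ge \phi\delta|S|$, so $\alpha(S) \ge \phi\delta/\Delta$; taking the minimum over $S$ gives $\alpha \ge \phi\delta/\Delta$.

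The hard part here is not any single calculation but getting the comparison $\alpha \ge \phi\delta/\Delta$ exactly right, and the subtle point I would watch most carefully is the mismatch between the ranges in the two definitions: $\phi$ quantifies over sets of volume at most $vol(V)/2$, whereas $\alpha$ quantifies over sets of \emph{cardinality} at most $n/2$, and these need not coincide when degrees are nonuniform. The third step above resolves this, but it is exactly where the $\delta/\Delta$ loss is unavoidable. I would also flag a conceptual observation worth stating: the lower bound only needs a graph of \emph{constant} conductance, because the degree-imbalance slowdown ($\Delta/\delta$) and the conductance slowdown ($1/\phi$) do not compound into their product in a single bottleneck; the funnel saturates $\tfrac{\Delta}{\delta\phi}$ precisely by forcing $\phi = \Theta(1)$, so no elaborate low-conductance construction is required for tightness.
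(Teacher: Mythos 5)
Your proof is correct, and its skeleton matches the paper's: a constant-conductance extremal graph whose bottleneck is the degree ratio for the lower bound, and the comparison $\alpha \ge \phi\,\delta/\Delta$ fed into Theorem~\ref{thm:alpha} for the upper bound. The differences are in the details, and they mostly work in your favor. For the lower bound the paper uses a clique-centered star $S_{\delta,\Delta}$ (a $\delta$-clique joined to $\Delta$ point nodes) and verifies constant conductance by a three-case analysis; note that construction actually has maximum degree $\Delta+\delta-1$ rather than $\Delta$, whereas your $K_{\delta,\Delta}$ meets the stated degree bounds exactly. Your appeal to Lemma~\ref{lem:match} (every matching across any cut uses distinct vertices of $A$, so $\nu(B(S))\le\delta$) is the same counting the paper performs informally (``at most $\delta$ of the $\Delta$ points can learn the rumor per round''), just cleanly routed through the lemma. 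For the upper bound the paper simply cites $(\delta/\Delta)\phi\le\alpha$ from~\cite{giakkoupis2014tight}; you prove it, and your three-step argument---$cut(S,V\setminus S)\le\Delta|\partial S|$, then $cut(S,V\setminus S)\ge\phi\cdot\min\{vol(S),vol(V\setminus S)\}$, then $\min\{vol(S),vol(V\setminus S)\}\ge\delta|S|$ using $|S|\le n/2$---is sound, including the point you rightly flag about the mismatch between the volume-based range in the definition of $\phi$ and the cardinality-based range in the definition of $\alpha$. The one place to tighten is the conductance of $K_{\delta,\Delta}$, which you assert rather than prove: the AM--GM step closes it, since $x\Delta+y\delta\ge 2\sqrt{xy\delta\Delta}$ together with the volume constraint $x\Delta+y\delta\le\delta\Delta$ gives $\frac{2xy}{x\Delta+y\delta}\le\sqrt{\frac{xy}{\delta\Delta}}\le\frac{1}{2}$, hence $\phi\ge 1/2$.
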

\begin{proof} 
Fix some $\delta$ and $\Delta$ as specified by the theorem statement.
Consider a generalization of the star, $S_{\delta, \Delta}$, where the center is composed of a clique containing $\delta$ nodes,
and there are $\Delta$ point nodes, each of which is connected to (and only to) all $\delta$ nodes in the center clique.
We first establish that the graph conductance of $S_{\delta, \Delta}$ is constant.
To see why, we consider three cases for each set $S$ considered in the definition of $\phi$.
The first case is that $S$ includes only $k$ center nodes. 
Here it follows:

\[\frac{cut(S, V\setminus S)}{vol(S)} = \frac{k(\delta-k) + k\Delta}{ k(\delta-1)+k\Delta}  \geq \frac{k(\delta-k) + k\Delta}{ 2k\Delta} \geq 1/2.\]

\noindent Next consider the case where $S$ contains only $k$ point nodes. 
Here it follows: 

\[\frac{cut(S, V\setminus S)}{vol(S)} = \frac{k\delta}{k\delta} = 1 .\]

\noindent Finally, consider the case where $S$ contains $\ell$ points nodes and $k$ center nodes. 
An important observation is that $vol(S) \leq vol(V)/2$. It follows that $k \leq \delta/2$: because, given
that every edge is symmetrically adjacent to the center, it would otherwise follow:
$vol(S) > vol(V)/2$. We now lower bound the conductance by $1/4$ as follows: If $\ell \leq \Delta/2$, then for each node in $S$, at least half of its neighbors are outside $S$, which shows that $\frac{cut(S, V\setminus S)}{vol(S)} \geq 1/2$. On the other hand, suppose $\ell > \Delta/2$. Then at least $1/4$ of the edges between the center clique and point nodes go out of $S$, because $\ell > \Delta/2$ and $k \leq \delta/2$. And at least $1/2$ of the edges inside the clique go out of $S$, because $k \leq \delta/2$. Since the conductance is simply a weighted average of these two ratios, we have $\frac{cut(S, V\setminus S)}{vol(S)} \geq 1/4$.

Having established the conductance is constant, to conclude the lower bound component of the theorem proof, it is sufficient to note that at most
$\delta$ of the $\Delta$ points can learn the rumor per round. It follows that every rumor spreading algorithm
requires at least $\Delta/\delta$ rounds.

Finally, to prove that $O(\Delta/(\delta\cdot \phi)\cdot \log{n})$ rounds is always sufficient for a graph with minimum and maximum
degrees $\delta$ and $\Delta$, respectively, we leverage the following property (noted in~\cite{giakkoupis2014tight}, among other places):
for every graph $G$ with vertex expansion $\alpha$ and graph conductance $\phi$, $(\delta/\Delta)\phi \leq \alpha$,
which directly implies $\alpha \geq (\delta/\Delta)\phi$.
Combining this observation with Theorem~\ref{thm:alpha},
the claimed upper bound follows.
\end{proof}

\section{PUSH-PULL with $b=0$}
\label{sec:b0}

We now study the performance of PUSH-PULL in the mobile telephone model
with $b=0$ and $\tau=\infty$. We investigate its performance with respect to
the optimal rumor spreading performance bounds from
Section~\ref{sec:prop}.
In more detail, we consider the following natural variation of PUSH-PULL,
adapted to our model:

\begin{quote}
In even rounds, nodes that know the rumor choose a neighbor at random and attempt to establish a connection
to PUSH the message. In odd rounds, nodes that do not know the rumor choose a neighbor at random and
attempt to establish a connection to PULL the message. 
\end{quote}

\noindent We study this PUSH-PULL variant with respect to both graph conductance and vertex expansion.

\paragraph{Graph Conductance Analysis.}
We begin by considering the performance of this algorithm with respect to graph conductance.
Theorem~\ref{thm:phi} tells us that for any minimum and maximum degree $\delta$ and $\Delta$, respectively,
the optimal rumor spreading algorithm completes in $O(\Delta/(\delta\cdot \phi)\cdot \log{n})$ rounds,
and there are graphs where $\Omega(\Delta/\delta)$ rounds are necessary.
Interestingly, as noted in Section~\ref{sec:related},
Daum et~al.~\cite{daum2015rumor} proved that the above algorithm 
terminates in $O( \mathcal{T} \cdot \frac{\Delta}{\delta} \cdot \log{n} )$ rounds,
where $\mathcal{T}$ is the optimal performance of PUSH-PULL in the classical telephone model.
Because $\mathcal{T} \in O((1/\phi)\log{n})$ in the classical setting, 
the above algorithm should terminate in $O( \frac{\Delta}{\delta \cdot \phi} \cdot \log^2{n} )$ rounds in our model---nearly
matching the bound from Theorem~\ref{thm:phi}.
Put another way, rumor spreading potentially performs poorly with respect to graph conductance,
but PUSH-PULL with $b=0$ nearly matches this poor performance.

Notice, we are omitting from consideration here the mobile telephone model with graphs that can
change (non-infinite $\tau$). The analysis from~\cite{daum2015rumor} does not hold in this case and new work would be required
to bound PUSH-PULL in this setting with less stable graphs.
By contrast, when studying uniform rumor spreading below for $b=1$, 
we explicitly include the graph stability as a parameter in our time complexity. 

\paragraph{Vertex Conductance Analysis.}
Arguably, the more important optimal time complexity bound to match is the $O((1/\alpha)\log{n})$ bound established in Theorem~\ref{thm:alpha},
as it is similar to the performance of PUSH-PULL in the telephone model.
We show, however, that for $b=0$, the algorithm can deviate from the optimal performance of Theorem~\ref{thm:alpha}
by a factor in $\Omega(\sqrt{n})$. This observation motivates our subsequent study of the $b=1$ case where we prove
that uniform rumor spreading can nearly match optimal performance with respect to vertex expansion.

\begin{lemma}\label{lem:badGraph} There is a graph $G$ with constant vertex expansion, in which the above algorithm would need at least $\Omega(\sqrt{n})$ rounds to spread the rumor, with high probability.
\end{lemma}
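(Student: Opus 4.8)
The plan is to build a graph that is forced to have constant vertex expansion, yet in which every node's useful edges are ``diluted'' by a large number of competing edges, so that the random choices of PUSH-PULL can make progress only through an unavoidable size-$\sqrt{n}$ bottleneck. Concretely, let $k = \lceil \sqrt{n}\rceil$ and split the vertices into a set $C$ of $k$ \emph{centers} and a set $L$ of $\approx n-k$ \emph{leaves}. Make $C$ a clique, connect every leaf to \emph{all} $k$ centers, and finally place a fixed constant-degree vertex expander (e.g.\ a random $3$-regular graph) on $L$. Thus each leaf has degree $k + O(1)$: about $\sqrt{n}$ edges into $C$ and only $O(1)$ ``expander'' edges inside $L$, while each center has degree $\Theta(n)$, almost all into $L$. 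The source may be placed arbitrarily.

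The first thing I would verify is that this $G$ has $\alpha = \Theta(1)$, which is exactly what the complete-bipartite/clique edges cannot guarantee on their own (a large set of leaves has only the $k$ centers on its boundary, giving expansion $\Theta(1/\sqrt n)$); the leaf expander is what rescues it. For any $S$ with $|S|\le n/2$: if at least half of $S$ lies in $L$, then the expander boundary of $S\cap L$ inside $L$ already contributes $\Omega(|S\cap L|)=\Omega(|S|)$ vertices to $\partial S$; and if more than half of $S$ lies in $C$, then $|S| < 2k = O(\sqrt n)$ is tiny, while the centers in $S$ reach essentially all of $L$, so $\partial S$ is enormous. Since adding edges can only enlarge boundaries, $\alpha(S)=\Omega(1)$ for every $S$, so $\alpha=\Theta(1)$.

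For the lower bound I would bound the number of newly informed nodes per round by two contributions. The \emph{center bottleneck} is deterministic: in any PUSH or PULL round each of the $k$ centers participates in at most one connection (Lemma~\ref{lem:match}), so at most $k=O(\sqrt n)$ leaves can be informed through centers in a round, regardless of how ties are broken. The \emph{expander contribution} is probabilistic: for a leaf to be newly informed via an expander edge, the acting leaf (an informed pusher in even rounds, an uninformed puller in odd rounds) must select one of its $O(1)$ expander edges out of its $k+O(1)$ total edges, which happens with probability $O(1/\sqrt n)$. Summing this indicator over the $\le n$ acting leaves gives an expectation of $O(\sqrt n)$ expander-mediated informings per round. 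Hence the expected number of newly informed nodes per round is $O(\sqrt n)$, so in expectation $\Omega(\sqrt n)$ rounds are needed to inform $\Theta(n)$ nodes.

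The main obstacle is turning this expectation bound into a with-high-probability statement, i.e.\ ruling out that some combination of the expander edges and the center routes accelerates spreading below the $\sqrt n$ timescale. The leaf choices in a round are independent indicators, so a Chernoff bound shows the number of expander-mediated informings is $O(\sqrt n)$ with high probability (using $\sqrt n \gg \log n$); a union bound over the first $c\sqrt n$ rounds then gives, with high probability, that the total number of informed nodes after $c\sqrt n$ rounds is at most $c\sqrt n \cdot O(\sqrt n) < n/2$ for a sufficiently small constant $c$. Therefore the algorithm has not terminated, and $\Omega(\sqrt n)$ rounds are required with high probability. I would take care that this argument is robust to the adversarial/favorable tie-breaking of the weak model, which it is, because the center bottleneck is a hard per-round capacity constraint independent of how accepted proposals are chosen.
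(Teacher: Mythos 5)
Your proof is correct in its essentials, but it uses a genuinely different construction from the paper's. The paper takes a clique $L$ on $n/2$ nodes and an independent set $R$ of $n/2$ nodes, joins them by a perfect matching, and adds a complete bipartite graph between $R$ and a subset $L^*\subset L$ of size $\sqrt{n}$; expansion comes from the clique together with the perfect matching, and the analysis treats PUSH and PULL asymmetrically (each node of $L\setminus L^*$ pushes across its matching edge with probability $\Theta(1/n)$, so only $O(\log n)$ such crossings occur per round with high probability, while pulls from $R$ land almost entirely in $L^*$, which can answer only $\sqrt{n}$ of them). Your graph instead shrinks the clique to the $\sqrt{n}$-node hub itself and replaces the matching by a constant-degree expander on the large side. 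This buys a symmetric, unified analysis: in every round, whether PUSH or PULL, each acting leaf selects a non-hub edge with probability $O(1/\sqrt{n})$, and the hub's per-round capacity is deterministically at most $\sqrt{n}$ by the one-connection constraint (your appeal to Lemma~\ref{lem:match} is apt). The cost is an external ingredient: the existence of constant-degree vertex expanders (random $3$-regular graphs), which the paper's self-contained construction avoids. Both arguments then yield a per-round bound of $O(\sqrt{n})$ newly informed nodes regardless of the current state, followed by Chernoff plus a union bound over the first $c\sqrt{n}$ rounds, exactly as in the paper.

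One technicality you should patch in the expansion verification: you assert that the expander boundary of $S\cap L$ is $\Omega(|S\cap L|)$ whenever at least half of $S$ lies in $L$. The expander guarantee quantifies only over sets of size at most $|L|/2$, and since $|L| = n - k$, a set $S\subseteq L$ with $|S|\leq n/2$ can exceed $|L|/2$ by up to $k/2$. For such $S$, apply the guarantee to a subset $S'\subseteq S\cap L$ of size exactly $\lfloor |L|/2\rfloor$: its expander boundary has $\Omega(n)$ vertices, of which at most $|S\cap L| - |S'| \leq k/2 = O(\sqrt{n})$ can lie inside $S$, so $|\partial S| = \Omega(n) = \Omega(|S|)$ still holds. (Alternatively, apply edge expansion to the complement $L\setminus S$, which is the smaller side in this regime.) With that one-line fix, the argument is complete.
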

\begin{proof}
We start with describing the graph. The graph has two sides, the left side $L$ is a complete graph with $n/2$ nodes, and the right side $R$ is an independent set of size $n/2$. The connection between $L$ and $R$ is made of two edge-sets: (1) a matching of size $n/2$, (2) a full bipartite graph connecting $R$ to a subset $L^*$ of $L$ where $L^*=\lfloor{\sqrt{n}\lfloor}$. See \Cref{fig:BadGraph-cropped}. Note that this graph has constant vertex expansion.

We argue that, w.h.p., per round at most $O(\sqrt{n})$ new nodes of $R$ get informed (regardless of the current state). Hence, the spreading takes $\Omega(\sqrt{n})$ rounds, despite the good constant vertex expansion of $G$. 

\begin{figure}[t]
	\centering
		\includegraphics[width=0.60\textwidth]{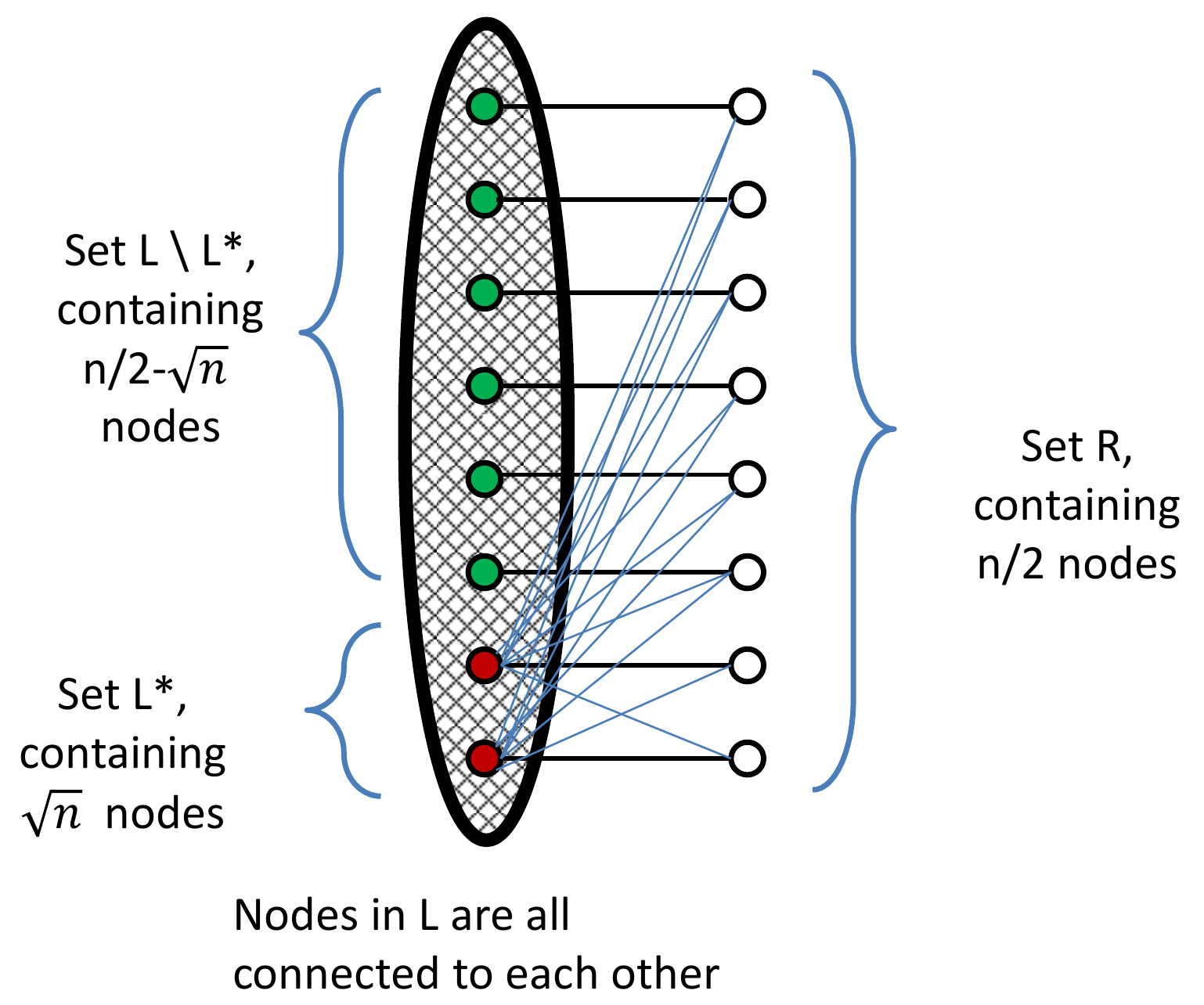}
	\caption{A graph with constant vertex expansion where {PUSH}-{RPULL} takes $\Omega(\sqrt{n})$ rounds. The nodes on the left side $L$ form a complete graph, and the nodes on the right side $R$ are only connected to $L$, via a matching of size $n/2$ and a complete bipartite graph to subset $L^* \subset L$ of size $\sqrt{n}$.}
	\label{fig:BadGraph-cropped}
\end{figure}

First, let us consider the PUSH process: each node in $L\setminus L^*$ pushes to a node in $R$ with probability $\frac{1}{|L|-1} = \frac{1}{n/2-1}$. Hence, over all nodes of $L\setminus L^*$, we expect $O(1)$ pushes to $R$, which, by Chernoff, means we will not have more than $O(\log n)$ such pushes, with high probability. On the other hand, each node in  $L^*$ pushes to at most one node in $R$. Hence, the total number of successful pushes to $R$ is at most $\sqrt{n}+O(\log n)$, with high probability. 

Now, we consider the restricted pull process (RPULL): for each node $v\in R$, with probability 1/$|L^*|$, the pull lands in $L\setminus L*$. Hence, overall, we expect $\frac{n}{2} \cdot \frac{1}{\sqrt{n}} \leq \sqrt{n}/2 $ pulls to land in $L\setminus L^*$. Hence, with high probability, at most $O(\sqrt{n})$ nodes of $R$ get informed by pulling nodes of $L \setminus L^*$. On the other hand, the vast majority of the pulls of $R$-nodes lands in $L^*$ but due to the restriction in the RPULL, nodes of $L^*$ can only respond to $L^*$ of these pulls, which is $\sqrt{n}$ many. Hence, the number of $R$-nodes informed via pulls is at most $O(\sqrt{n})$, with high probability.

Taking both processes into account, we get that per round, $O(\sqrt{n})$ new nodes of $R$ get informed, which means rumor spreading will require at least $\Omega(\sqrt{n})$ rounds. 
\end{proof}

\section{PUSH-PULL with $b=1$}
\label{sec:b1}

In the previous section, we proved that PUSH-PULL in the mobile telephone model and $b=0$ fails to match 
the optimal vertex expansion
bound by a factor in $\Omega(\sqrt{n})$ in the worst case.
 Motivated by this shortcoming, we turn our attention to the setting where $b=1$.
In particular, we consider the following natural variant of PUSH-PULL adapted to our model with $b=1$.
We call this algorithm {\em productive PUSH} (or, PPUSH) as nodes leverage the $1$-bit tag
to advertise their informed status and therefore keep connections productive.\footnote{We drop the PULL behavior
form PUSH-PULL in this algorithm description as it does not help the analysis.  Focusing just on the PUSH
behavior simplifies the algorithm even further. }

\begin{quote}
At the beginning of each round, each node uses a single bit to advertise whether or not it is {\em informed} (knows the rumor).
Each informed node that has at least one uninformed neighbor, chooses an uninformed neighbor with uniform randomness
and tries to form a connection to send it the rumor.
\end{quote}


\noindent We now analyze PPUSH in a connected network $G$ with vertex expansion $\alpha$
and stability factor $\tau \geq 1$.  Our goal is to prove the following theorem:

\begin{theorem}
Fix a dynamic network $G$ of size $n$ with vertex expansion $\alpha$ and stability factor at least $\tau$, $1 \leq \tau \leq \log{\Delta}$.
The PPUSH algorithm solves rumor spreading in $G$ in
$(1/\alpha)\Delta^{1/\tau} \tau \log^3{n}$ rounds, with high probability in $n$.
\label{thm:ppush}
\end{theorem}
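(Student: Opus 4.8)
The plan is to decompose time into \emph{phases} of $\tau$ consecutive rounds. Since the stability parameter is at least $\tau$, I can align the phases with the stable intervals so that the topology is fixed throughout each phase, and analyze PPUSH on a single static graph per phase. Let $S$ be the informed set at the start of a phase. The first observation that makes the analysis clean is that no informed node is ever the target of a proposal (only informed nodes send proposals, and only to uninformed neighbors), so the set of nodes informed in a single round is \emph{exactly} the set of uninformed nodes that receive at least one proposal, independent of how the adversary resolves acceptances. The entire argument thus reduces to lower bounding, with high probability, the number of distinct uninformed nodes hit by the random proposals accumulated over the $\tau$ rounds of a phase.

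Granting such a per-phase bound, the outer analysis mirrors Lemma~\ref{lem:gamma}. The key claim I would establish is: starting from an informed set $S$ with $|S|\le n/2$, PPUSH informs $\Omega\!\left(\nu(B(S)) / (\Delta^{1/\tau}\log^2 n)\right)$ new nodes during the phase, w.h.p. Combining this with $\nu(B(S)) \ge \alpha|S|/4$ (Lemma~\ref{lem:msize}) shows that the informed set grows by a factor $1 + \Omega(\alpha/(\Delta^{1/\tau}\log^2 n))$ per phase while $|S|\le n/2$; a symmetric claim applied to the shrinking uninformed set $V\setminus S$ (whose size is then at most $n/2$, so Lemma~\ref{lem:msize} again gives $\nu(B(S)) \ge \alpha|V\setminus S|/4$) drives the number of uninformed nodes down by the same factor afterward. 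By Fact~\ref{fact:prob}, $O((1/\alpha)\Delta^{1/\tau}\log^3 n)$ phases then suffice to inform everyone, i.e.\ $O((1/\alpha)\Delta^{1/\tau}\tau\log^3 n)$ rounds, and a union bound over the phases (each failing with probability $1/\poly{n}$) yields the high-probability guarantee.

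The heart of the proof is the per-phase matching claim, which I would attack by a degree-peeling argument on a maximum matching $M$ of $B(S)$, with $m = |M| = \nu(B(S))$. For a matched edge $(u_e,v_e)$, node $u_e$ proposes to $v_e$ in a round with probability $1/d_U^{(t)}(u_e)$, where $d_U^{(t)}(u_e)$ is $u_e$'s current number of uninformed neighbors; since $u_e\in S$ remains informed, this count is non-increasing in $t$. I would bucket the edges of $M$ by their initial degree $d_U^{(0)}(u_e)$ into $\tau$ geometric classes with ratio $\theta=\Delta^{1/\tau}$. The crucial phenomenon is that proposals ``wasted'' by many informed nodes pointing into a shared high-degree hub are not truly lost: because such a hub is necessarily small, it gets informed quickly, after which the uninformed-degrees of the nodes pointing into it collapse and expose their private matched partners in the following round. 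Formalizing this, I would argue that over the $\tau$ rounds the effective degree governing each class falls by a factor $\theta$ per round, so that either (i) a class is already small enough that its matched partners are each hit with probability $\Omega(1/\theta)$ in some round, or (ii) informing a class's hub itself constitutes $\Omega(m/(\tau\log n))$ progress. In either case a $\Omega(1/(\theta\log n))$-fraction of $M$ is realized, giving the claimed $\Omega(m/(\Delta^{1/\tau}\log^2 n))$ bound after a Chernoff argument (the proposals of distinct informed nodes within a round are independent, so the count of hit targets concentrates).

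I expect the main obstacle to be making this geometric degree-reduction rigorous in the presence of the coupling between which nodes get informed and how the degrees $d_U^{(t)}(\cdot)$ evolve. A naive per-edge survival bound only yields a useless $\Theta(\tau/\Delta)$ factor, since an adversarial instance can concentrate proposals on hubs; the novelty must lie in charging those concentrated proposals to genuine progress and showing that $\tau$ rounds exactly peel all $\log_\theta\Delta=\tau$ degree layers. The second delicate point is controlling the dependencies across rounds well enough to obtain a high-probability (rather than merely expected) per-phase bound, which is what the union bound over phases requires.
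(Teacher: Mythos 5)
Your outer structure is exactly the paper's: align $\tau$-round phases with the stable intervals, lower-bound the informed-to-uninformed matching by $\alpha|S|/4$ via Lemma~\ref{lem:msize}, show each phase converts a polynomially-small (in $\Delta^{1/\tau}$ and $\log n$) fraction of that matching into newly informed nodes, and iterate; your opening observation (an uninformed node becomes informed exactly when it receives at least one proposal, however ties are resolved) is correct and implicitly used by the paper. The gap is in the per-phase claim itself, which is the real content of the theorem.

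The mechanism you propose---individual uninformed-degrees collapsing by a factor $\theta=\Delta^{1/\tau}$ per round once shared hubs are informed, yielding the dichotomy (i) matched partners hit with probability $\Omega(1/\theta)$ or (ii) informed hubs themselves constitute progress---does not hold, because the two cases are not exhaustive. Concretely, let every $u\in L$ have degree $\Delta$ with all of its $R$-neighbors (including its matched partner) adjacent to no other node of $L$. Then there are no hubs at all (each $v\in R$ expects at most $1/(\Delta-\tau)$ proposals per round), each $u$'s uninformed degree drops by exactly one per round---never by a factor of $\theta$---and $u$'s matched partner is hit during the phase with probability exactly $\tau/\Delta$, which is far below $\Delta^{-1/\tau}$ for $\tau\ge 2$ and large $\Delta$. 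Both of your cases fail, so your peeling lemma would certify essentially no progress; yet PPUSH informs $m$ distinct nodes in every round of this instance, entirely via ``wasted'' proposals landing on distinct nodes that are neither matched partners nor hubs. This third source of progress, absent from your dichotomy, is not a corner case: it is the central one, and it is precisely where the $\log n$ loss in the bound comes from.

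What decreases geometrically in the paper's argument (Lemma~\ref{lem:matching1}) is not any node's degree but the degree \emph{sum} of a maintained residual bipartite subgraph, in which a perfect matching is preserved as an invariant by deleting a matched pair whenever its $R$-endpoint is informed; and the decrease is not a consequence of the dynamics but one branch of an either/or. After discarding the few high-degree $L$-nodes, one round is run and every $R$-node receiving a proposal is deleted (w.h.p.\ this removes every node expecting at least $c\log n$ proposals); then either the residual degree sum has dropped by the factor $\Delta^{1/r}$, and the induction continues, or it has not, in which case the next round generates $m\Delta^{-1/r}/\Theta(r)$ proposals in expectation, and since every surviving target expects only $O(\log n)$ of them, they must land on $\Omega\bigl(\frac{m\Delta^{-1/r}}{r\log n}\bigr)$ \emph{distinct} nodes. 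This ``total proposal volume divided by $O(\log n)$ congestion'' counting is the idea your sketch is missing, and without it the peeling cannot be closed. A secondary issue: your outer loop assumes the per-phase bound holds w.h.p., whereas the paper's per-phase guarantee (Theorem~\ref{thm:matching}) is stated only with constant probability---unavoidably so when the expected gain is below $\log n$, e.g.\ for $\tau=1$---and the paper compensates with a stochastic-dominance-plus-Chernoff argument over $\Theta(t_{max})$ phases. Your extra $\log n$ of slack can plausibly rescue a w.h.p.\ per-phase statement (when the target drops below one node, informing one node per round is deterministic), but that dichotomy must be argued; as written it is asserted.
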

 
 To prove this theorem, the core technical part is in studying the success
 of PPUSH over a stable period of $\tau$ rounds, which we present in Section~\ref{sec:upper:1}. This analysis bounds the number of new nodes that receive the message in the stable period with respect to the size of the maximum
 matching defined over the informed and uninformed partitions at the beginning of the stable period.
 In Section~\ref{sec:upper:2}, we connect this analysis back to the vertex expansion of the graph
 (leveraging our earlier analysis from Section~\ref{sec:prop} connecting $\alpha$ to edge independence numbers),
 and carry it through over multiple stable periods until we can show rumor spreading completes.
 
\subsection{Matching Analysis}
\label{sec:upper:1}

Our main theorem in this section lower bounds the number of rumors that spread across
a bipartite subgraph of the network over $r$ stable rounds.

\begin{theorem}
Fix a bipartite graph $G$ with bipartitions $L$ and $R$, such
$|R| \geq |L| = m$ and  $G$ has a matching of size $m$. 
Assume $G$ is a subgraph of some (potentially) larger network $G'$,
and all uninformed neighbors in $G'$ of nodes in $L$ are also in $R$.
Fix an integer $r$, $1 \leq r \leq \log{\Delta}$, where $\Delta$ is the maximum degree of $G$.
Consider an $r$ round execution of PPUSH in $G'$ in which the nodes in $L$ start
with the rumor and the nodes in $R$ do not.
With constant probability: at least $\Omega(\frac{m\Delta^{-1/r}}{r\log{n}})$ nodes in $R$ learn the rumor.
\label{thm:matching}
\end{theorem}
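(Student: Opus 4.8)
The plan is to reduce the $r$-round analysis to the behavior of a single ``degree band'' and then to track how the uninformed degrees of that band shrink over the $r$ rounds. First I would classify the nodes of $L$ by their degree into $O(\log\Delta)$ bands, each spanning a factor of two, and by pigeonhole fix the heaviest band $L^\ast$, so that $|L^\ast| = a \ge m/(2\log n)$ and every $\ell \in L^\ast$ has (uninformed) degree in $[D,2D)$ for a common $D$; this pigeonhole step is exactly where the $\log n$ in the denominator is spent, so it then suffices to show that $L^\ast$ causes $\Omega(a\,\Delta^{-1/r}/r)$ nodes of $R$ to become informed over the $r$ rounds. Since the assumed matching of size $m$ restricts to a matching saturating $L^\ast$ into distinct nodes of $R$, I may speak of the \emph{partner} $\rho_\ell \in R$ of each $\ell \in L^\ast$, and all these partners are distinct.

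The engine of the argument is a single-round estimate obtained from the matching (equivalently, from Hall's condition). In one round each active $\ell$ proposes to its partner with probability at least $1/(2D)$, and because partners are distinct these events inform distinct nodes of $R$; a first-moment computation together with the independence of the node choices (Fact~\ref{fact:prob} and a Chernoff bound) then give $\Omega(a/D)$ newly informed nodes with constant probability. I would combine this ``self-hit'' bound with a complementary ``hub'' bound: Hall's condition forces the neighborhoods of $L^\ast$ to span at least $a$ vertices, so the proposal mass cannot all concentrate on a few targets, and a convexity estimate on $\Pr[\rho\text{ is hit}] = 1-\prod_{\ell \ni \rho}(1-1/d_\ell)$ yields that at least $\Omega(\max(D, a/D))$ distinct nodes of $R$ are informed in a single round. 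For low-to-moderate $D$ (roughly $D = O(r^2\Delta^{1/r})$) this already beats the target: if fewer than the target number were ever informed, the uninformed degrees would stay in $[D/2,2D)$ and at least $a/2$ partners would remain uninformed every round, so the self-hit bound accumulated over the $r$ rounds would itself inform $\Omega(ra/D) > \Omega(a\Delta^{-1/r}/r)$ nodes, a contradiction.

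The difficult regime, and what I expect to be the main obstacle, is large $D$ together with large $a$, where neither self-hits nor a single round's hub informations reach the target. Here the intended mechanism is a geometric \emph{degree reduction}: I would argue that in any round which does not already inform $\Omega(\text{target}/r)$ nodes, Hall's condition forces enough of the informed nodes to be shared ``hubs'' of the band, so that informing them divides the uninformed degree of a constant fraction of $L^\ast$ by a factor of at least $\Delta^{1/r}$. Since a degree bounded by $2\Delta$ can be divided by $\Delta^{1/r}$ at most $r$ times before dropping to $O(\Delta^{1/r})$, within the $r$ available rounds a constant fraction of $L^\ast$ reaches uninformed degree $O(\Delta^{1/r})$, at which point one more round informs each such partner with probability $\Omega(\Delta^{-1/r})$ and hence informs $\Omega(a\Delta^{-1/r})$ of them; charging this progress to the single round in which a given node completes its drop costs the final factor of $1/r$. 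I would make the whole process rigorous as an induction on $r$ (one round of PPUSH either informs enough directly or produces a sub-band of $\ge a/2$ still-matched nodes whose maximum uninformed degree has fallen to $2\Delta^{(r-1)/r}$, to which the inductive hypothesis for $r-1$ applies and telescopes to the stated bound), and would convert all expectation statements into constant-probability statements by a Chernoff bound within each round together with a union bound over the $O(r)$ rounds. The delicate point throughout is establishing the degree-reduction step against an \emph{arbitrary} overlap structure of the neighborhoods and an adversarial choice of which proposal each node accepts; this is precisely the novel matching analysis on which the theorem rests.
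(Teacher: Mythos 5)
Your high-level skeleton---a per-round dichotomy (either a round informs $\Omega(\mathrm{target}/r)$ nodes, or a degree potential drops by a factor $\Delta^{1/r}$), iterated $r$ times and finished by a self-hit bound once degrees reach $O(\Delta^{1/r})$---is genuinely the shape of the paper's argument, and your banding step mirrors the paper's own $r=1$ proof (Lemma~\ref{lemma-r1}). But the step you yourself flag as ``the main obstacle'' is exactly the step that is missing, and in the form you state it I do not believe it can be proved. You assert that any unproductive round cuts the uninformed degree of a \emph{constant fraction} of the fixed band $L^*$ by a factor $\Delta^{1/r}$. Note first that this alternative is itself a strong claim about informed nodes: if $c\,a$ nodes each lose at least $D/2$ neighbors, and each informed target accounts for at most $c\,a$ lost edges, the round must have informed $\Omega(D)$ nodes. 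So proving ``not enough informed $\Rightarrow$ degree collapse'' requires showing that the few informed nodes are hubs covering all but a $\Delta^{-1/r}$ fraction of the edges of \emph{each} of $\Omega(a)$ nodes individually---and this per-node guarantee is precisely what an adversarial overlap structure breaks, since targets of intermediate proposal load $\lambda$ survive a round with probability roughly $e^{-\lambda}$ and different nodes' edges can be split across hit hubs and surviving medium-load targets in uneven ways. The accounting that actually works bounds the total \emph{surviving edge mass}, i.e., the degree sum---and that is the paper's invariant: Lemma~\ref{lem:matching1} tracks $\sum_{u\in L'} deg_{L',R'}(u) \leq m\Delta^{1-(i-1)/r}$, not a per-node degree bound over a fixed band. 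The paper's second missing device is the \emph{cleanup round}: all targets whose expected proposal load is at least $c\log n$ get informed (hence removed) w.h.p., after which every surviving target absorbs only $O(\log n)$ proposals; so if the degree sum has \emph{not} dropped by $\Delta^{1/r}$, the next round generates $\Omega(m\Delta^{-1/r}/r)$ proposals spread over $\Omega\bigl(\frac{m\Delta^{-1/r}}{r\log n}\bigr)$ \emph{distinct} targets. That is how the paper converts proposal mass into distinct informed nodes (and where its $\log n$ factor is spent---yours is spent on banding); your sketch has no substitute for it.

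There is also a quantitative flaw that kills the claimed bound even if the dichotomy were granted: your induction retains only ``$\geq a/2$ still-matched nodes'' per level, which compounds to $a/2^{r}$ after $r$ levels. For $r$ near $\log\Delta$---exactly the regime needed for the application in Theorem~\ref{thm:ppush}---this is $a/\Delta$, and your final bound degrades to roughly $\Omega\bigl(\frac{m\Delta^{-1/r}}{2^{r}\log n}\bigr)$, far below $\Omega\bigl(\frac{m\Delta^{-1/r}}{r\log n}\bigr)$. The losses per level must be $1-\Theta(1/r)$, as in the paper, where $|L''| \geq (1-1/r)^2|L'|$ so that the product over $r$ levels stays bounded below by a constant ($(1-1/r)^{2r} \geq 1/16$). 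Finally, a smaller error: your single-round hub bound $\Omega(\max(D, a/D))$ is impossible when $a < D$, since only $a$ proposals are sent per round; it must be capped at $a$ (the correct statement is on the order of $\min(a, D + a/D)$). In short: right skeleton, but the core lemma needs the degree-sum formulation and the cleanup-round mechanism, and the recursion must lose only a $1/r$ fraction per level.
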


We start with some helpful notation. 
For any $L' \subseteq L$ and $R' \subseteq R$,
let $G(L',R')$ be the subgraph of $G$ induced
by the nodes $L'$ and $R'$.
Similarly, let $N_{L',R'}$ and $deg_{L',R'}$ be the neighbor and degree functions,
respectively,
defined over $G(L',R')$.

We begin with the special case $r=1$. We then move to our main analysis which handles all $r\geq 2$.
(Notice, our result below $r=1$ provides an approximation of $O(\sqrt{\Delta \log \Delta})$ which is 
tighter than the $\Delta$ approximation for this case claimed by Theorem~\ref{thm:matching}.
We could refine the theorem claim to more tightly capture
performance for small $r$, but we leave it in the looser more general form for the sake of concision in the result statement.)
%

\begin{lemma} \label{lemma-r1} For $r=1$, PPUSH produces a matching of size $\Omega(m/\sqrt{\Delta \log \Delta})$, with constant probability.
\end{lemma}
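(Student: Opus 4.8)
The plan is to reframe the quantity of interest. Since every node in $R$ accepts at most one incoming proposal, the induced matching has size exactly equal to the number $X$ of nodes in $R$ that receive at least one proposal, so it suffices to show $X = \Omega(m/\sqrt{\Delta\log\Delta})$ with constant probability. Two regimes are immediate and I would dispatch them first: if $m \le \sqrt{\Delta\log\Delta}$ then a single successful proposal already gives $X \ge 1 = \Omega(m/\sqrt{\Delta\log\Delta})$, and if $\Delta = O(1)$ the target is $\Omega(m)$, delivered by the matched-target argument below; so assume $\Delta$ large and $m > \sqrt{\Delta\log\Delta}$. Let $\sigma\colon L \to R$ be the injection given by the size-$m$ matching, let $d_u \le \Delta$ denote the degree of $u$, and recall each $u \in L$ proposes to a uniform neighbor $T(u)$. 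I set the threshold $D = \sqrt{\Delta\log\Delta}$ and split $L = L_{\mathrm{low}} \cup L_{\mathrm{high}}$ according to whether $d_u \le D$. At least one part has size $\ge m/2$, and I treat the two cases by genuinely different arguments.

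In the case $|L_{\mathrm{low}}| \ge m/2$, I use the \emph{matched-target} indicators $\mathbf{1}[T(u) = \sigma(u)]$. Because $\sigma$ is injective, every such hit lands on a distinct node of $R$, so $X$ is at least $Y = \sum_{u \in L_{\mathrm{low}}} \mathbf{1}[T(u)=\sigma(u)]$, a sum of \emph{independent} indicators with $\E[Y] = \sum_{u \in L_{\mathrm{low}}} 1/d_u \ge |L_{\mathrm{low}}|/D \ge m/(2D)$. Since $m > D$ we have $\E[Y] > 1/2$, and a standard Chernoff lower-tail bound (or, when $\E[Y]=\Theta(1)$, simply $\Pr[Y \ge 1] \ge 1 - e^{-\E[Y]}$) yields $Y = \Omega(\E[Y]) = \Omega(m/\sqrt{\Delta\log\Delta})$ with constant probability, as needed.

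In the case $|L_{\mathrm{high}}| \ge m/2$ I instead count distinct occupied bins directly, which is where the matched-target bound is too weak. For $v \in R$ let $\mu_v = \sum_{u \in N(v) \cap L_{\mathrm{high}}} 1/d_u$ be the expected number of high proposals landing on $v$; since $v$ has at most $\Delta$ neighbors and each contributes less than $1/D$, every $\mu_v \le \Delta/D$. Writing $X_{\mathrm{high}}$ for the number of nodes hit by $L_{\mathrm{high}}$, independence of the proposals gives $\E[X_{\mathrm{high}}] = \sum_v \bigl(1 - \prod_{u \in N(v)\cap L_{\mathrm{high}}}(1 - 1/d_u)\bigr) \ge \sum_v (1 - e^{-\mu_v})$. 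The function $x \mapsto 1 - e^{-x}$ is concave and vanishes at $0$, so on $[0,\Delta/D]$ it lies above its chord through the origin, giving $1 - e^{-\mu_v} \ge \tfrac{D}{\Delta}(1-e^{-\Delta/D})\,\mu_v \ge \tfrac{D}{2\Delta}\mu_v$ (using $\Delta/D \ge 1$). Summing and using $\sum_v \mu_v = |L_{\mathrm{high}}|$ yields $\E[X_{\mathrm{high}}] \ge \tfrac{D}{2\Delta}|L_{\mathrm{high}}| \ge \tfrac{mD}{4\Delta} = \Omega\!\bigl(m\sqrt{\log\Delta}/\sqrt{\Delta}\bigr)$, which is at least the claimed bound (with a $\log\Delta$ factor to spare), and which is $\Omega(\log\Delta) = \omega(1)$ in the non-trivial regime.

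The remaining obstacle, and the delicate point of the whole lemma, is upgrading this last expectation to a constant-probability statement, since the occupancy indicators $\mathbf{1}[\mathrm{load}_v \ge 1]$ are not independent. The plan is to observe that they are \emph{negatively correlated}: for $v \ne v'$, $\Pr[v,v' \text{ both empty}] = \prod_u (1 - p_{u\to v} - p_{u\to v'}) \le \prod_u (1-p_{u\to v})(1-p_{u\to v'}) = \Pr[v\text{ empty}]\Pr[v'\text{ empty}]$, so all covariances are nonpositive and $\Var[X_{\mathrm{high}}] \le \sum_v \Var(\mathbf{1}[\mathrm{load}_v\ge 1]) \le \E[X_{\mathrm{high}}]$. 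Chebyshev then gives $\Pr[X_{\mathrm{high}} \ge \E[X_{\mathrm{high}}]/2] \ge 1 - 4/\E[X_{\mathrm{high}}]$, which is a positive constant because $\E[X_{\mathrm{high}}] = \Omega(\log\Delta)$ is large. Combining the two cases establishes $X = \Omega(m/\sqrt{\Delta\log\Delta})$ with constant probability. I expect the high-degree concentration (the negative-correlation variance bound feeding Chebyshev) to be the step requiring the most care, whereas the low-degree case is essentially routine.
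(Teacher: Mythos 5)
Your proof is correct, and it reaches the lemma by a genuinely different decomposition than the paper's. The paper buckets the matched $L$-endpoints into $\log\Delta$ dyadic degree classes, uses pigeonhole to extract one class of size $\ge m/\log\Delta$ with degrees $\approx d$, and then cases on whether $d$ is below or above $\sqrt{\Delta/\log\Delta}$: below, it runs the same matched-pair argument you use; above, it splits the $R$-side neighborhood by a degree threshold $d$ and in each subcase bounds the expected number of hit nodes, invoking a Chernoff bound for negatively correlated indicators. You instead split $L$ itself by the single threshold $D=\sqrt{\Delta\log\Delta}$ and argue by majority: the low side is handled by the same matched-pair mechanism (independent indicators, so concentration is routine), while on the high side you avoid any splitting of $R$ entirely by the chord inequality $1-e^{-\mu_v}\ge \frac{D}{2\Delta}\mu_v$ on $[0,\Delta/D]$, which handles all $R$-degree profiles uniformly and even yields the stronger bound $\E[X_{\mathrm{high}}]=\Omega(m\sqrt{\log\Delta}/\sqrt{\Delta})$. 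Your concentration step is also more elementary and more carefully justified than the paper's: you verify pairwise negative covariance of the occupancy indicators directly via $\prod_u(1-p_{u\to v}-p_{u\to v'})\le\prod_u(1-p_{u\to v})(1-p_{u\to v'})$ and then apply Chebyshev, whereas the paper appeals to Chernoff under negative correlation without proof; the price is an inverse-polynomial rather than exponentially small failure probability, which is immaterial since only constant probability is claimed. You also explicitly dispatch the degenerate regimes ($m\le\sqrt{\Delta\log\Delta}$, where $X\ge 1$ deterministically, and $\Delta=O(1)$) that the paper leaves implicit---though note that in the $\Delta=O(1)$ regime your set $L_{\mathrm{low}}$ need not be all of $L$ (since $D<\Delta$ always), so the matched-target argument there should be applied to all of $L$ with hit probability $1/d_u\ge 1/\Delta=\Omega(1)$, a one-line fix.
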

\begin{proof}
Consider the maximal matching $M$ in the bipartite graph, and let $L_M$ be the set of informed endpoints of this matching. Divide nodes of $L_M$ into $\log \Delta$ classes based on their degree in the bipartite graph $G$, by putting nodes of degree in $[2^{i-1}, 2^{i}]$ in the $i^{th}$ class. By pigeonhole principle, one of these classes contains at least $|M|/\log \Delta = m/\log \Delta$ nodes. Let $L^{i}_{M}$ be this class and let $d=2^{i-1}$ be such that the degrees in this large class are in $[d, 2d]$. Now each node $v$ in $L_M$ pushes to its pair in the matching $M$ with probability $1/deg(v) \geq 1/(2d)$. Hence, we expect $\frac{m}{2d\log \Delta}$ uninformed endpoints of $M$ to be informed by getting a push from their matching pair. If $d = O(\sqrt{\Delta/\log \Delta})$, Chernoff bound already shows us that with high probability in this expectation and thus with at least constant probability, the matching size is at least $m \cdot \Omega(1/\sqrt{\Delta \log \Delta})$, hence establishing the Lemma's claim. Suppose on the contrary that $d = \Omega(\sqrt{\Delta/\log \Delta})$.

Now, let $R^*$ be the set of $R$-nodes adjacent to $L^{i}_M$. Call each node $v \in R^*$ high-degree if $deg(v) \geq d$. 
Each $L^{i}_M$-node pushes to each of its adjacent neighbors with probability at least $1/(2d)$, which means each high-degree node in $R^*$ gets at least one push with probability at least $1-(1-\frac{1}{2d})^{2d} > 1/4$. 
Note that there are at least $md/\log\Delta$ edges of $G$ incident on $L^{i}_M$. Hence, the number of edges incident on $R^*$ is also at least $md/\log\Delta$. Now either at least $md/(2\log\Delta)$ edges are incident on high-degree nodes of $R^*$, or at least $md/(2\log\Delta)$ edges are incident on low-degree nodes of $R^*$. In the former case, since each high-degree node has degree at most $\Delta$, there must be at least $md/(2\Delta \log \Delta) = \Omega(m/\sqrt{\Delta\log \Delta})$ high degree nodes. Since each of these gets hit with probability at least $1/4$, we expect at least $\Omega(m/\sqrt{\Delta\log \Delta})$ such hits. Due to the negative correlation of these hits, the Lemma's claim follows from Chernoff bound. 

Suppose on the contrary that we are in the latter case and $md/(2\log\Delta)$ edges are incident on low-degree nodes. Each low-degree node $v \in R^*$ gets hit with probability at least $1-(1-\frac{1}{2d})^{deg(v)} \geq \Theta(deg(v)/d)$. Since summation of degrees among low-degree nodes is at least $md/(2\log\Delta)$, we get that the expected number of hit low-degree nodes is at least $\Omega(m/\log \Delta) \gg \Omega (m/\sqrt{\Delta \log \Delta})$. A Chernoff bound concentration then completes the proof.          
\end{proof}

%

\noindent We start now the proof of the $r\geq 2$ case by making a claim that says if for a given large subset of $L$ that
has a relatively small degree sum,
a couple rounds of the algorithm run on this subset will either generate a large enough matching,
or leave behind a subset with an even smaller degree sum.

\begin{lemma}
Fix any $i \in [r]$,  $L' \subseteq L$, and $R' \subseteq R$, such that: 
$|R'| \geq |L'|  \geq m/16$; 
$\sum_{u\in L'} deg_{L',R'}(u) \leq m\Delta^{1-\frac{i-1}{r}}$; all uninformed neighbors of $L'$ in $G'$ are in $R'$;
and $G(L',R')$ has a matching of size $|L'|$.
With high probability in $n$, one of the following two events will occur if we execute
PPUSH with the nodes in $L'$ knowing the rumor and the nodes in $R'$ not knowing the rumor:

\begin{enumerate}

\item within two rounds, at least $\Omega(\frac{m\Delta^{-1/r}}{r\log{n}})$ nodes in $R'$ learn the rumor; or
\item after one round, we can identify subsets $L''\subseteq L'$, $R'' \subseteq R'$,
with $R''$ containing only nodes that do not know the rumor, such that
$|R''| \geq |L''| \geq (1-1/r)^2\cdot |L'|$;   $\sum_{u\in L''} deg_{L'',R''}(u) \leq m\Delta^{1-\frac{i}{r}}$;
$R''$ contains all uninformed neighbors of $L''$ nodes in $G'$;
and $G(L'', R'')$ has a matching of size $|L''|$.
\end{enumerate}
\label{lem:matching1}
\end{lemma}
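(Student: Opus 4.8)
The plan is to analyze a \emph{single} round of PPUSH run on $G(L',R')$, track the total uninformed degree as a potential, and show a clean per-vertex tradeoff: either a round informs enough nodes (giving outcome 1), or almost all edges incident to $L'$ are ``used up'' so that the surviving uninformed-degree sum drops by the factor $\Delta^{1/r}$ (giving outcome 2). First I would record two structural reductions. Since every uninformed neighbor of $L'$ lies in $R'$, all pushes land in $R'$, and since $R'$-nodes are uninformed they never send proposals, so a node $v\in R'$ becomes informed exactly when at least one of its $L'$-neighbors pushes to it. For $v\in R'$ write $p_v=\sum_{u\in N(v)\cap L'}1/deg(u)$ for the expected number of pushes $v$ receives; then $\Pr[v\text{ stays uninformed}]\le e^{-p_v}$ by Fact~\ref{fact:prob}, while $\Pr[v\text{ informed}]\ge 1-e^{-p_v}$. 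A useful sanity identity is $\sum_{v\in R'}p_v=|L'|$ by double counting, which already shows that if the ``load'' $p_v$ is spread over many nodes then many nodes get informed.

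Next I would prune the high-degree side of $L'$. Set a threshold $\theta=\Theta(rS/|L'|)$, where $S=\sum_{u\in L'}deg_{L',R'}(u)\le m\Delta^{1-(i-1)/r}$, and delete from $L'$ all nodes of degree exceeding $\theta$. By a counting bound there are at most $S/\theta\le |L'|/r$ such nodes, so the remaining set $L_1$ has $|L_1|\ge(1-1/r)|L'|$, every node of $L_1$ has degree at most $\theta$, and the hypothesized matching restricted to $L_1$ still saturates it. The point of the truncation is that now $p_v\ge d_v/\theta$, where $d_v=|N(v)\cap L_1|$, i.e. $d_v\le\theta p_v$. This yields the key per-vertex tradeoff inequality: using $1-e^{-x}\ge x e^{-x}$ (equivalently $e^x\ge 1+x$), for every $v$,
\[
 d_v\, e^{-p_v}\ \le\ \theta\, p_v\, e^{-p_v}\ \le\ \theta\,(1-e^{-p_v}).
\]
Summing over $R'$ and taking expectations gives $\E[U_1]\le\theta\,\E[I]$, where $I$ is the number of newly informed $R'$-nodes and $U_1=\sum_{u\in L_1}(\text{uninformed neighbors of }u)$ is exactly the post-round degree sum I want to shrink.

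With this in hand the dichotomy is clean. Let $T^\ast=\frac{m\Delta^{-1/r}}{r\log n}$ be the target. If $\E[I]\gtrsim T^\ast$ I would conclude outcome~1: since informing is monotone in the independent push choices and changing one choice flips the status of at most two $R'$-nodes, a bounded-differences (McDiarmid) concentration gives $I=\Omega(T^\ast)$ with high probability, where the $\log n$ in the denominator of $T^\ast$ supplies the concentration margin. Otherwise $\E[I]<T^\ast$, and then $\E[U_1]\le\theta\,\E[I]\le\Theta\!\big(r\Delta^{1/r}\cdot\Delta^{1-i/r}\big)\cdot T^\ast=\Theta\!\big(m\Delta^{1-i/r}/\log n\big)$, comfortably below $m\Delta^{1-i/r}$. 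Moreover $\E[I]$ small forces (via the saturating matching $\mu$, whose images are distinct) that only a tiny fraction of matched partners are informed, so I set $L''=\{u\in L_1:\mu(u)\text{ uninformed}\}$ and $R''=$ the uninformed neighbors of $L''$. One then checks the four requirements of outcome~2: $|L''|\ge |L_1|-I\ge(1-1/r)^2|L'|$ because $I\ll (1/r)|L'|$; the degree sum is $\sum_{u\in L''}deg_{L'',R''}(u)\le U_1\le m\Delta^{1-i/r}$; $R''$ by construction contains exactly the uninformed neighbors of $L''$ and nothing informed; and $\mu$ restricted to $L''$ is a matching of size $|L''|$ inside $G(L'',R'')$ (in particular $|R''|\ge|L''|$).

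The main obstacle, and the real content, is the tradeoff inequality $\E[U_1]\le\theta\,\E[I]$: it is what forces the genuine dichotomy and what pins the per-step shrinkage to exactly $\Delta^{1/r}$ (through the choice $\theta\approx rS/|L'|$ together with $r\le\log\Delta$). The second delicate point is upgrading the two expectation statements to high probability simultaneously; I expect to rely on bounded-differences concentration for both $I$ and $U_1$, exploiting the $\log n$ slack built into $T^\ast$, and the allowance of two rounds gives room to boost the informed count in the borderline regime where a single round is only marginally productive. The matching-preservation step is subtle precisely because shrinking $U_1$ and keeping $\mu$ intact pull in opposite directions, but the argument reconciles them: outcome~2 is invoked only when $\E[I]$ is small, which is exactly when few matched partners are destroyed.
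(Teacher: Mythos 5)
Your expectation-level tradeoff $\E[U_1]\le\theta\,\E[I]$ is correct and elegant, and it is a genuinely different route from the paper's; but the lemma demands high-probability guarantees, and the step that upgrades your branch-2 conclusion from expectation to ``w.h.p.'' is a genuine gap. In the branch $\E[I]<T^\ast$, outcome 2 requires the \emph{realized} post-round degree sum $U_1$ to be at most $m\Delta^{1-i/r}$ with probability $1-1/\mathrm{poly}(n)$. From $\E[U_1]=O(m\Delta^{1-i/r}/\log n)$, Markov only gives failure probability $O(1/\log n)$, and the bounded-differences (McDiarmid) argument you invoke is vacuous here: changing one push choice can flip the informed status of an $R'$-node $v$ and hence change $U_1$ by $d_v$, which can be as large as $\min(\Delta,|L_1|)$, so the difference constants are unbounded in the relevant regimes and the resulting tail bound does not beat the $m\Delta^{1-i/r}$ threshold. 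The same defect (less fatally) affects branch 1: McDiarmid's additive deviation $\Theta(\sqrt{|L_1|\log n})$ can dwarf $T^\ast=\frac{m\Delta^{-1/r}}{r\log n}$, so $I=\Omega(T^\ast)$ w.h.p.\ does not follow from it; one needs negative association plus multiplicative Chernoff, which is the tool the paper uses. Likewise your claims ``$I\ll(1/r)|L'|$'' and ``few matched partners are destroyed'' are expectation statements pushed through Markov, yielding $1-o(1)$ rather than $1-1/\mathrm{poly}(n)$.

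The missing idea is exactly the paper's Step~2: the nodes that destroy concentration of $U_1$ are the heavy receivers, i.e.\ $v\in R'$ with large $d_v$, and they must be eliminated before any case split. Either $p_v\ge c\log n$, in which case $v$ is informed w.h.p.\ (Chernoff plus a union bound over all such $v$) and can be removed together with its matched partner, or $p_v<c\log n$, which deterministically forces $d_v\le c\,\theta\log n$. After this surgery the dichotomy can be made on a deterministic or realized quantity instead of on $\E[I]$: for instance, one can check $d_v\le \frac{c\theta\log n}{1-1/e}\bigl(1-e^{-p_v}\bigr)$ for all surviving $v$, so $\E[I]$ small implies $\sum_{v:\,p_v<c\log n}d_v\le m\Delta^{1-i/r}$ \emph{deterministically}, and outcome 2 then holds on the w.h.p.\ event that every heavy receiver was informed; conversely, if that surviving weight is large, then $\E[I]=\Omega(T^\ast)$ and negative-association Chernoff yields outcome 1. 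The paper takes the other repair: it performs the split on the realized degree sum after round~1 and uses fresh randomness in a second round when that sum is still large---note that your proposal, despite mentioning ``the allowance of two rounds,'' never actually uses the second round. Without one of these fixes, branch 2 of your argument rests on a $1-O(1/\log n)$ guarantee where the lemma requires $1-1/\mathrm{poly}(n)$.
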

\begin{proof}
This proof consists of several steps, which we present one by one below.
Before doing so, let $M$ be a maximum matching
on $G(L',R')$. Because we assume $|L'| = |M|$,
we know every node in $L'$ shows up in $M$.
For every node in $u$ in $M$, we use term {\em $u$'s original match},
to refer to the node $u$ is matched with in $M$.
Every node in $L'$ has a match and so do at $|M|$ nodes in $R'$.
In addition, when describing the behavior of the PPUSH algorithm,
we describe the event in which an informed node $u$ attempts
to connect to an uninformed neighbor $v$ as $u$ {\em sending a proposal}
to $v$. A given node $v$, therefore, might have many proposal sent to it in a given round.

\paragraph{Step \#1: Remove High-Degree Nodes in $L'$.}
The lemma assumptions tell us that the sum of degrees of nodes in $L'$ is not too large.
A natural implication is that at most a small fraction of these nodes can have a degree larger than the smallest
possible average degree given by this bound. 

In more detail, fix $\delta_i = \Delta^{1-\frac{i-1}{r}}\cdot r \cdot 16$.
We claim that at most a $1/r$ fraction of nodes in $L'$ have a degree of size at least $\delta_i$.
To see why this claim is true we first note that by the lemma assumptions, $|L'| \geq m/16$.
It follows that if more than $(1/r)\cdot|L'|$ nodes have a degree of size at least $\delta_i$,
then the total sum of degrees would be greater than $(1/r)\cdot|L'| \cdot  \delta_i \geq m\cdot \Delta^{1-\frac{i-1}{r}}$,
which contradicts the degree bound assumption.

Let $L_1' \subseteq L'$ be the subset of $L'$ with degrees upper bounded by $\delta_i$.
Notice, as just established, $L_1'$ has at least a $(1-1/r)$-fraction of the nodes from $L'$.
Let $R_1'$ be the subset of $R'$ that remains after we remove from $R'$ 
any node that is no longer connected to $L_1'$.
Notice, for every node $u\in L_1'$, $u$'s original
match is still in $R_1'$. Therefore, 
we get $|R_1'| \geq |L_1'|$, and the property that there is matching of size $|L_1'|$
in $G(R_1',L_1')$.
Finally, because we only removed nodes from $R'$ if they were not connected to
$L_1'$, we know every node in $L_1'$ still has all of its uninformed neighbors from $R'$ included in $R_1'$.

\paragraph{Step \#2: Remove High-Degree Nodes in $R_1'$.}
We now run one round of the PPUSH algorithm in our graph
and bound its behavior on the nodes in $G(L_1',R_1')$.
To do so, we first fix $X_v$, for each $v\in R_1'$,
to be the random variable describing the number of proposals received by $v$ from nodes in $L_1'$ in this round.
We are interested in the nodes with large expected values for $X$.
In particular, we fix $H = \{ v\in R_1' : E[X_v] \geq c\log{n} \}$,
for some constant $c\geq 1$ that we fix later.
Notice, for each $v\in H$, we can define $X_v = \sum_{u\in N_{L_1',R_1'}(v)} Y_{u,v}$,
where for each $u\in N_{L_1',R_1'}(v)$, the variable $Y_{u,v}$ is a $0/1$ indicator variable indicating that $u$ sent a proposal to $v$.

Notice, for $u \neq u'$, $Y_{u,v}$ and $Y_{u',v}$ are independent.
Therefore, $X_v$ is defined as the sum of independent random variables.
It follows that we can apply a Chernoff bound to achieve concentration
on the mean $\mu = E[X_v]$.
A straightforward consequence of this concentration is that the probability $X_v \geq 1$ is at least $1-n^{-h}$,
where $h \geq 1$ is a constant that grows with the constant $c$ selected for our definition of $H$.
Notice, this probability only {\em lower bounds} the probability that $v$ receives a proposal,
as it only focuses on proposals arriving from nodes in $L_1'$. 
Other neighbors of $v$ not in $L_1'$ might also send $v$ a proposal, which would only
{\em increase} the probability that $v$ receives a proposal---helping our goal
in this step of establishing that these high degree nodes receive proposals with high probability.
We now consider all nodes in $H$.
There might be dependencies between different $X$ variables,
but we can bypass these dependencies by applying a simple union bound to
 establish that the probability that {\em every} node in $H$
receives at least one proposal is still high with respect to $n$ (for sufficiently large $c$ in our definition of $H$).

Moving forward, we assume this event occurs. 
We now want to remove from consideration some nodes from our bipartite subgraph.
In particular, for every node $v\in R_1'$ that receives a proposal in this round,
we remove $v$ from $R_1'$.  Notice, we do not require that the proposal
came from $L_1'$ to remove $v$. That is, if $v$ receives a proposal from
any node---be it in $L_1'$ or not---we remove it.
In addition, for each $v$ we remove, if $v$ is the original match of some node $u$
in $L_1'$, we also remove $u$ from $L_1'$.
Let $R_2'$ and $L_2'$ be the nodes that remain from $R_1'$ and $L_1'$, respectively.
Every node in $L_2'$ still has its original match in $R_2'$,
therefore $|R_2'| \geq |L_2'|$,
and $G(L_2', R_2')$ has a maximum matching of size $|L_2'|$.
Also, since we only removed nodes from $R_1'$ that received the rumor,
every node in $L_2'$ still has all of its uninformed neighbors remaining in $R_2'$.


\paragraph{Step \#3: Check the Number of Nodes Left in $G(L_2',R_2')$.} 
We now consider how many nodes were removed from consideration in the last
step. 
The first case we consider is that these matches moved
more than a $1/r$ fraction of the nodes from $L_1'$.
If this occurs, then it follows that at least $|L_1'|/r$ nodes learned
the rumor in this one round. 
In the first step, however, we established that $|L_1'| \geq (m/16)(1-1/r) \in \Omega(m)$.
Therefore the number of nodes that learn the rumor is in $\Omega(m/r)$, which is which is more than large enough to directly satisfy the bound of bound of $\frac{m\Delta^{-1/r}}{r\log{n}}$ claimed in objective 1 of the lemma. 

Moving forward, therefore,
we assume that no more than a $1/r$ fraction of the nodes in $L_1'$ were
removed to define $L_2'$.
Combining the reductions from the first two steps, we have $|L_2'| \geq |L'|\cdot(1-1/r)^2$.


	\paragraph{Step \#4: Bound the Degree Sum of $L_2'$.}
	Notice, if we set $L'' = L_2'$ and $R'' = R_2'$, then 
	we have shown so far that these sets satisfy
	{\em most} of the conditions required by objective $2$ of the lemma statement. 
	Indeed, the only condition we have not yet analyzed is the sum of the degrees of the nodes in $L_2'$, which we examine next. 
	
	We divide the possibilities for this sum into two cases.
	The first case is that $\sum_{u\in L_2'} deg_{L_2',R_2'}(u) \leq m\Delta^{1-\frac{i}{r}}$.
	That is, that the sum is small enough to satisfy objective $2$ of the lemma.
	If this occurs, we are done.
	The second case is that $\sum_{u\in L_2'} deg_{L_2',R_2'}(u) > m\Delta^{1-\frac{i}{r}}$.
	We will now show that if this is true then, with high probability in $\Delta$, in one additional
	round of the algorithm we inform enough new nodes to satisfy objective $1$ of the lemma.
	
	To do so, first recall that by definition, every node in $L_2'$ has a degree at most $\delta_i$,
	and therefore any given edge in $G(L_2', R_2')$ is selected by a $L_2'$ node
	with probability at least $1/\delta_i$.
	For each $v\in R_2'$, let $X_v$ be the expected number of proposals that $v$ receives
	from nodes in $L_2'$ during this round.
	Using our above lower bound on edge selection probability,
	we can calculate: $E[X_v] \geq deg_{L_2', R_2'}(v)/\delta_i.$
	Let $Y = \sum_{v\in R_2'} X_v$ be the total number of proposals received by $R_2'$ nodes from $L_2'$ nodes.
	By linearity of expectation: 
	
	\[ E[Y] \geq \sum_{v\in R_2'} deg_{L_2',R_2'}(v)/\delta_i= \sum_{u\in L_2'} deg_{L_2',R_2'}(u)/\delta_i> m\Delta^{1-\frac{i}{r}} \cdot (1/\delta_i) = (m\Delta^{-\frac{1}{r}})/ \Theta(r). \]
	
	\noindent As defined, $Y$ is {\em not} necessarily the sum of independent random variables,
	as there could be dependencies between different $X$ values.
	However, it is straightforward to verify that for any $u \neq v$,  $X_u$ and $X_v$ are {\em negatively associated}: 
	$u$ receiving more proposal can only reduce the number of proposals received by $v$. 
	Because we can apply a Chernoff bound to negatively associated random variables,
	we can achieve concentration around the expected value for $Y$. Note that $E[Y] \geq c\log{n}$ as otherwise the claim of the lemma reduces to informing just one node which holds trivially. 
	It follows that with high probability in $n$, we have $Y \geq (m\Delta^{-\frac{1}{r}})/ \Theta(r)$.

        We are not yet done. Recall that $Y$ describes the total number of {\em proposals} received by
        nodes in $R_2'$, not the total number of {\em nodes} in $R_2'$ that receive proposals.
        It is, however, this later quantity that we care about.
                Fortunately, at this step we can leverage Step \#2,
                during which, with high probability,
                we removed all nodes in $R_1'$ that expected to receive at least $\log{n}$
                proposals.
                In other words, w.h.p., for every $v\in R_2'$: $E[X_v] \in O(\log{n})$.
        
       For any $v\in R_2'$, we can bound the probability that $X_v > c\log{n}$, for some sufficiently large constant $c \geq 1$,
       to be polynomially small in $n$ (with an exponent that increases with $c$).
       By a union bound, the probability that any node in $R_2'$ receives more than $c\log{n}$ values is still
       polynomially small in $n$. Assume, therefore, that this $c\log{n}$ upper bound holds for all nodes.
       It follows that if $Y \geq (m\Delta^{-\frac{1}{r}})/ \Theta(r)$, the  number of unique nodes receiving
       proposals is at least $\frac{Y}{\Theta(\log n)} \geq \frac{m\Delta^{-\frac{1}{r}}}{\Theta(r \log n)} = \Omega\big(\frac{m\Delta^{-\frac{1}{r}}}{r \log n}\big)$.

       We can simple combine the high probability bounds
       on the size of $Y$ being large and the size of $X_v$ being small (for every relevant $v$), by a simple union bound on all of those events, as each holds with high probability in $n$ and we certainly have at most $n+1$ such events.

       Pulling together the pieces, we have shown that 
       if the degree sum on $L_2'$ is too large, 
       then with high probability in $n$ at least $\frac{m\Delta^{- 1/r}}{r\log{n}}$ new nodes are informed in this round---satisfying the lemma.
       
     To conclude, we note that the above analysis only applies to the number of nodes in $R_2'$ that are informed by nodes in $L_2'$.
     It is, of course, possible that some nodes in $R_2'$ are also informed by nodes outside of $L_2'$. 
     This behavior can only help this step of the proof as we are proving a lower bound on the number of informed
     nodes and this can only increase the actual value.
 %
	%
\end{proof}

We now leverage Lemma~\ref{lem:matching1} to prove Theorem~\ref{thm:matching}. 
The following argument establishes a base case that satisfies the lemma preconditions
of Lemma~\ref{lem:matching1} and then repeatedly applies it $r$ times.
Either: (1) a matching of sufficient size is generated along the way (i.e., case $1$ of
the lemma statement applies); or (2) we begin round $r$ with a set $L'$ with
size in $\Omega(m)$ that has an average degree in $\Theta(\Delta^{1/r})$---in which case it is
easy to show that in the final round we get a matching of size $\Omega(\frac{m\Delta^{-1/r}}{r\log n})$.

\begin{proof}[Proof (of Theorem~\ref{thm:matching}).]
Fix a bipartite graph $G$ with bipartitions $L$ and $R$ with a matching of size $|L| = m$, and a value $r$,
as specified by the theorem statement preconditions. 
If $r=1$, the claim follows directly from \Cref{lemma-r1}. Assume in the following, therefore, that  $r\geq 2$.

We claim that we can apply Lemma~\ref{lem:matching1} to $L' = L$, $R' = R$, and $i=1$.
To see why, notice that this definition of $L'$ satisfies the preconditions
$L' \subseteq L$, $R' \subseteq R$, and $|R'| \geq |L'| \geq m/16$.
It also satisfies the condition requiring all of the uninformed neighbors of $L'$ to be in $R'$.
Finally, because we fixed $i=1$, it holds that: $\sum_{u\in L'} deg_{L',R'}(u) \leq m\Delta^{1-\frac{i-1}{r}} = m\Delta$,
as there are $m$ nodes in $L'$ each with a maximum degree of $\Delta$.

Consider this first application of Lemma~\ref{lem:matching1}.
It tells us that, w.h.p., either we finish after one or two rounds,
or after a single round we identify a smaller bipartitate graph $G(L'',R'')$,
where $L''$ and $R''$ satisfy all the properties needed to apply the Lemma to $L' = L''$, $R' = R''$,
and $i=2$.
We can keep applying this lemma inductively, each time increasing the value of $i$,
until either: (1) we get through $i=r-1$; (2) an earlier application of the lemma generates
a sufficiently large matching to satisfy the theorem; or (3) at some point before
either option 1 or 2, the lemma fails to hold. Since the third possibility happens with probability polynomially small in $n$ at each application, we can use a union bound and conclude that with high probability, it does not happen in any of the iterations. Ignoring this negligible probability, we focus on the other two possibilities. 

Before that, let us discuss a small nuance in applying the lemma $r$ times. We need to ensure that the specified $L'$ sets are always of size at least $m/16$, as required to keep applying the lemma. Notice, however,
that we start with an $L'$ set of size $m$,
and the lemma guarantees it decreases by a factor of at most $(1-1/r)^2$.
Therefore, after $i < r \leq  \log{\Delta}$ applications,
$|L''| \geq (1-1/r)^{2i}\cdot m > (1/4)^{2i/r}\cdot m > (1/4)^2\cdot m = m/16$.

Going back to the two possibilities, if option 2 holds, we are done. On the other hand, if option 1 holds,  we have one final step in our argument. In this case, we end up with having identified a bipartite subgraph $G(L'',R'')$ with a maximum matching of size at least $|L''| \geq m/16$. 
We also know $\sum_{u\in L''} deg_{L'',R''}(u) \leq m\Delta^{1-\frac{i}{r}} = m \Delta^{1/r}$ as $i=r-1$. 
In this case, it holds trivially that at most $m/32$ nodes $u$ of $L''$ have $deg_{L'',R''}(u) \geq 32 \Delta^{1/r}$. Hence, at least $m/32$ nodes $u$ have degree at most $32\Delta^{1/r}$. Now, each of these proposes to its own match in $R''$ with probability at least $\Delta^{-1/r}/32$. Thus, we expect $\Theta(m\Delta^{-1/r})$ nodes of $R''$ to receive proposals directly from their matches. Note that these events are independent. Moreover, we have $m\Delta^{-1/r} = \Omega(\log n)$ as otherwise the claim of the theorem would be trivial. Therefore, w.h.p., $\Theta(m\Delta^{-1/r})$ nodes of $R''$ receive proposals from their pairs. Hence, at least $\Theta(m\Delta^{-1/r})$ nodes of $R''$ get informed, thus completing the proof.
\end{proof}

\subsection{Connecting Rumor Spreading to Bipartite Matchings and Proving \Cref{thm:ppush}}
\label{sec:upper:2}

We now leverage the matching analysis from Section~\ref{sec:upper:1} to bound the rumor spread over time, hence eventually proving \Cref{thm:ppush}. 

\paragraph{Preliminaries.}
Divide the rounds into {\em stable phases} each consisting of $\tau$ rounds, such that the graph
does not change during a stable phase.
We label these phases $1,2,...$.
Let $V$ be the node set for $G$.
Let $S_t \subset V$, for some phase $t \geq 1$,
be the subset of {\em informed} nodes that know rumor at the beginning of phase $t$.
Let $U_t = V \setminus S_t$, where $V$ is the node set of $G$.
Let $f(\tau) = \tau \Delta^{ 1/\tau } \log{n}$ be the approximation factor on the maximum matching
provided by Theorem~\ref{thm:matching}.
We define the notion of a {\em good} phase with respect to this approximation factor:

\begin{definition}
Fix some phase $t$. 
If $|S_t| \leq n/2$, we call this phase {\em good} if $|S_{t+1}| \geq \big(1+ \frac{ \alpha }{ 4 \cdot f(\tau)} \big) S_t$.
Else if $|S_t| > n/2$, we call this phase {\em good} if $|U_{t+1}| \leq \big(1- \frac{ \alpha }{ 4\cdot f(\tau)}\big) |U_t|$.
\end{definition}

Notice, the factor of $4$ in the above definition comes from Lemma~\ref{lem:msize},
from our earlier analysis connecting the size of a maximum matching
across any partition to the vertex expansion of the graph.
 
\paragraph{Bounding the Needed Good Phases.}
We next bound the number of good phases needed to complete rumor spreading.
Intuitively, until the rumor spreads to at least half the nodes,
each good phase increases the number of informed nodes by a fractional factor of at least $\gamma = \alpha/(4f(\tau))$.
Given that we start with $1$ informed node, after $t$ such increases, the number of informed nodes is at 
least $1\cdot (1+\gamma)^t \geq 2^{t\gamma}$ (by Fact~\ref{fact:prob}).
Therefore, we need $t \geq (1/\gamma)\log{(n/2)}$ good phases to get the rumor to at least half the nodes.

Once we have informed half the nodes, we can flip our perspective. We now decrease our uninformed nodes
by a factor of at least $\gamma$. 
If we start with no more than $n/2$ uninformed nodes,
then after $t$ good phases, the number of uninformed nodes left is no more than $(n/2)(1-\gamma)^t < (n/2)e^{-\gamma t}$ (also by Fact~\ref{fact:prob}).
Similar to before, $t \geq (1/\gamma)\ln{(n/2)}$ good phases is sufficient to reduce these remaining
nodes down to a constant number at which we can complete the rumor spreading.
(See the proof of Lemma~\ref{lem:gamma} for the details of this style of argument.)
We capture this intuition formally as follows:

\begin{lemma}
After $t_{max} \in O(  (f(\tau)/\alpha) \cdot \ln{n} )$ good phases, PPUSH has solved rumor spreading.
\label{lem:good}
\end{lemma}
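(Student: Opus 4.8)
The plan is to set $\gamma = \alpha/(4 f(\tau))$ and then replay, essentially verbatim, the two-phase growth/decay argument from the proof of Lemma~\ref{lem:gamma}, but now counting good phases rather than individual rounds. The definition of a good phase has been engineered precisely so that each good phase behaves like one round of the idealized process analyzed there: while fewer than half the nodes are informed, a good phase multiplies $|S_t|$ by at least $(1+\gamma)$, and once more than half are informed, a good phase multiplies $|U_t|$ by at most $(1-\gamma)$.

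First I would handle the growth regime. Starting from the single source we have $|S_1| \geq 1$, so as long as $|S_t| \leq n/2$, applying the good-phase guarantee inductively gives $|S_{t+1}| \geq (1+\gamma)^t$. By the inequality $(1+\gamma) \geq 2^{\gamma}$ from Fact~\ref{fact:prob}, this is at least $2^{\gamma t}$, so after $t_1 = (1/\gamma)\log(n/2)$ good phases the informed set has size at least $n/2$. I would note that the crossover phase---the one in which $|S_t|$ first exceeds $n/2$---is still governed by the growth condition, because the phase type is determined by $|S_t|$ at the phase's start, so no special case is needed at the threshold.

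Next I would handle the decay regime. Once $|S_t| > n/2$ we have $|U_t| \leq n/2$, and each good phase now shrinks the uninformed set: after $t$ further good phases, $|U| \leq (n/2)(1-\gamma)^{t}$, which by $(1-\gamma) \leq e^{-\gamma}$ (again Fact~\ref{fact:prob}) is at most $(n/2)e^{-\gamma t}$. Taking $t_2 = \Theta((1/\gamma)\ln n)$ drives $|U|$ below one; if one wants to be careful about integrality, $t_2$ reduces the uninformed count to a constant and a constant number of additional good phases finishes, which is absorbed into the asymptotics. Summing, $t_{max} = t_1 + t_2 = O((1/\gamma)\log n) = O((f(\tau)/\alpha)\ln n)$, as claimed.

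The main ``obstacle'' here is really just bookkeeping rather than a genuine difficulty: this lemma is the phase-level analogue of Lemma~\ref{lem:gamma}, and the only points requiring any care are (i) confirming that the good-phase definition is stated in terms of the start-of-phase set sizes, so that the growth and decay conditions compose cleanly across the $n/2$ threshold, and (ii) substituting $1/\gamma = 4 f(\tau)/\alpha$ to see that the $O((1/\gamma)\log n)$ bound matches the stated $O((f(\tau)/\alpha)\ln n)$. I would also emphasize that the lemma itself makes no probabilistic claim---it merely assumes the phases in question are good---so all of the randomness is deferred to the separate task of showing, via Theorem~\ref{thm:matching} and Lemma~\ref{lem:msize}, that each phase is good with high probability.
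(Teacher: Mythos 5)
Your proof is correct and follows essentially the same route as the paper: the paper justifies Lemma~\ref{lem:good} by exactly this two-regime argument with $\gamma = \alpha/(4f(\tau))$, invoking Fact~\ref{fact:prob} for the bounds $(1+\gamma)^t \geq 2^{\gamma t}$ and $(1-\gamma)^t \leq e^{-\gamma t}$ and explicitly pointing to the proof of Lemma~\ref{lem:gamma} for the details. Your added remarks---that the phase type is fixed by the start-of-phase set sizes and that the lemma is deterministic given goodness---are accurate and consistent with the paper's treatment.
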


\paragraph{Bounding the Fraction of Phases that are Good.}
Our final step is to bound how many phases are needed before we have achieved, with high probability,  the number of good phases
required by Lemma~\ref{lem:good} to solve rumor spreading.

We begin by focusing on the probability that a given phase is good.
It is in this analysis that we pull together many of the threads woven so far throughout this paper.
In particular, we consider the partition between informed and uniformed nodes.
The maximum matching between these partitions describe the maximum number of
new nodes that might be informed. We can leverage Theorem~\ref{thm:matching}
to prove that with at least constant probability, we spread rumors
to at least a $1/f(\tau)$-fraction of this matching.
We then leverage our earlier analysis of the relationship between matchings and vertex expansion,
to show that this matching generates a factor of $\alpha/4$ increase in informed nodes (or decrease in uninformed,
depending on what stage we are in the analysis). This matches our definition of {\em good}.
Formally:

\begin{lemma}
There is some constant probability $p$
such that each phase is good with probability at least $p$.
\label{lem:good:2}
\end{lemma}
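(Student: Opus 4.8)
The plan is to show that for an arbitrary phase $t$, conditioned on the current partition into informed set $S_t$ and uninformed set $U_t$, the phase is good with at least some fixed constant probability $p$. I would split into the two regimes used in the definition of a good phase. Suppose first that $|S_t| \leq n/2$. The goal is to show that with constant probability at least $\frac{\alpha}{4 f(\tau)}|S_t|$ new nodes become informed during the $\tau$ stable rounds of the phase.

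\medskip

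\noindent First I would set up the bipartite graph to which Theorem~\ref{thm:matching} applies. Let $S = S_t$ and consider $B(S)$, the bipartite graph between $S$ and $V \setminus S$ with the cut edges. By Lemma~\ref{lem:match} and Lemma~\ref{lem:msize}, the maximum matching $\nu(B(S))$ has size at least $\gamma|S| \geq (\alpha/4)|S|$. Let $M$ be such a maximum matching, let $L$ be its informed endpoints (so $L \subseteq S$) and $R$ its uninformed endpoints (so $R \subseteq V \setminus S$); then $|L| = |R| = \nu(B(S)) \geq (\alpha/4)|S|$, and by construction $G(L,R)$ contains a perfect matching of size $m := |L|$. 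The key point is that restricting attention to the subgraph on $(L,R)$ exactly matches the hypotheses of Theorem~\ref{thm:matching}: $|R| \geq |L| = m$, there is a matching of size $m$, and crucially every uninformed neighbor of an $L$-node in the full network $G$ lies in $V \setminus S = U_t \supseteq R$ — but Theorem~\ref{thm:matching} requires them to lie in $R$ itself. Here the embedding-in-a-larger-network clause of Theorem~\ref{thm:matching} does the work: I would take $G' = G$ (the current static graph of the phase) with $L, R$ as the designated bipartition, so that pushes from $L$ into $V\setminus S$ outside of $R$ only help, exactly as the theorem permits. Since $G$ is stable for all $\tau$ rounds of the phase, PPUSH run on $G'$ for $r = \tau \leq \log\Delta$ rounds is precisely an $r$-round execution of PPUSH with $L$ informed and $R$ uninformed. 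Theorem~\ref{thm:matching} then guarantees that with constant probability at least $\Omega\big(\frac{m \Delta^{-1/r}}{r \log n}\big)$ nodes of $R$ learn the rumor.

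\medskip

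\noindent Next I would translate this bound into the factor appearing in the definition of a good phase. Substituting $m \geq (\alpha/4)|S_t|$ and $r = \tau$, the number of newly informed nodes is at least a constant times
\[
\frac{(\alpha/4)|S_t| \, \Delta^{-1/\tau}}{\tau \log n} = \frac{\alpha |S_t|}{4\tau \Delta^{1/\tau}\log n} = \frac{\alpha |S_t|}{4 f(\tau)},
\]
using $f(\tau) = \tau \Delta^{1/\tau}\log n$. Absorbing the hidden constant from Theorem~\ref{thm:matching} (which I would note is where the constant probability $p$ and any slack in constants is accounted for), this is exactly the increase $\frac{\alpha}{4 f(\tau)}|S_t|$ required for the phase to be good when $|S_t| \leq n/2$. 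The symmetric regime $|S_t| > n/2$ is handled identically by swapping the roles of informed and uninformed: apply the same argument to the matching across the partition, where now $\nu(B(S_t)) \geq \gamma|U_t| \geq (\alpha/4)|U_t|$ by Lemma~\ref{lem:msize} (which bounds matchings against the smaller side), giving that at least $\frac{\alpha}{4 f(\tau)}|U_t|$ uninformed nodes get informed, i.e.\ $|U_{t+1}| \leq \big(1 - \frac{\alpha}{4 f(\tau)}\big)|U_t|$. In both cases the event occurs with the same constant probability inherited from Theorem~\ref{thm:matching}, establishing the claim with $p$ equal to that constant.

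\medskip

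\noindent The main obstacle I anticipate is the hypothesis mismatch in Theorem~\ref{thm:matching}: the theorem insists that \emph{all} uninformed neighbors of $L$-nodes lie in $R$, whereas in an actual phase the informed set $L$ (the matched endpoints) generally has uninformed neighbors scattered throughout $U_t \setminus R$. The resolution is to observe that this clause is satisfied vacuously once we declare $G'$ to be the graph with bipartition $(L,R)$ and treat the additional uninformed neighbors as belonging to the larger network $G'$ — extra pushes landing outside $R$ only increase the number of informed nodes and never decrease the matching we count, so the lower bound is preserved. A secondary subtlety is ensuring $m\Delta^{-1/\tau}/(\tau\log n) = \Omega(1)$ (or at least $\Omega(\log n)$) so the asymptotic guarantee is non-trivial; I would dispatch this by noting that if the target increment $\frac{\alpha}{4f(\tau)}|S_t|$ is below a constant then the phase needs only a single new informed node, which PPUSH achieves trivially with constant probability since the matching is non-empty whenever $S_t \neq V$.
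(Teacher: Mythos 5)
Your overall route is the same as the paper's---take the maximum matching across $(S_t,U_t)$, lower-bound its size via Lemma~\ref{lem:msize}, and feed it into Theorem~\ref{thm:matching} with $r=\tau$---but your instantiation of $R$ breaks the one precondition that cannot be waved away. You set $R$ to be only the matched uninformed endpoints, correctly notice that Theorem~\ref{thm:matching} requires \emph{all} uninformed neighbors (in $G'$) of $L$-nodes to lie in $R$, and then claim this is handled ``vacuously'' by the larger-network clause because pushes landing outside $R$ only help. That misreads what the clause is for. The conclusion of Theorem~\ref{thm:matching} counts informed nodes \emph{in $R$}, and its proof repeatedly uses the identity that a node $u\in L$ proposes to each of its $R$-neighbors with probability $1/deg(u)$, where $deg(u)$ is $u$'s degree in the bipartite graph; this identity holds precisely because PPUSH chooses uniformly among \emph{uninformed} neighbors, and the precondition forces those to coincide with $u$'s neighbors in $R$. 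Under your choice of $R$, an $L$-node matched to a single $R$-node but adjacent to $\Delta-1$ uninformed nodes of $U_t\setminus R$ proposes into $R$ with probability only $1/\Delta$, not $1$. Pushes that land outside $R$ do not help: they inform nodes the theorem does not count, while draining exactly the probability mass its analysis needs. Concretely, if every $L$-node has this shape (each with its own private uninformed neighbors off $R$), the expected number of $R$-nodes informed in $r=\log\Delta$ rounds is $O(m\log\Delta/\Delta)$, far below the promised $\Omega\bigl(\frac{m\Delta^{-1/r}}{r\log n}\bigr)=\Omega\bigl(\frac{m}{\log\Delta\,\log n}\bigr)$ once $\Delta$ is polynomial in $n$. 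So the theorem genuinely fails under your instantiation; this is not a formality that a reinterpretation of $G'$ can absorb.

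The fix is exactly what the paper does: keep $L$ as the $m$ informed endpoints of the matching, but define $R$ to be \emph{all} uninformed neighbors in $G'$ of nodes in $L$. This set still contains every $L$-node's match, so $|R|\geq|L|=m$ and $G(L,R)$ still has a matching of size $m$, and now the precondition of Theorem~\ref{thm:matching} holds by construction rather than by fiat. With that single change, the rest of your argument---the substitution $m\geq(\alpha/4)|S_t|$, the algebra yielding $m/f(\tau)$ newly informed nodes, and the symmetric case $|S_t|>n/2$ applied to $U_t$ with $|U_t|\leq n/2$---goes through and coincides with the paper's proof.
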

\begin{proof}
Consider the maximum matching $M$ between the partitions 
$S_t$ and $U_t$.
Let $m=|M|$.
A direct implication of Lemma~\ref{lem:msize} (see the equivalent formulation in the first line of the proof),
is the following:

\begin{itemize}

\item if $|S_t| \leq |U_t|$ then $m \geq |S_t|(\alpha/4)$;
\item else if $|U_t| > |S_t|$ then $m \geq |U_t|(\alpha/4)$.

\end{itemize}

\noindent We now apply Theorem~\ref{thm:matching} to bound what fraction of this matching we can expect to inform in the $\tau$
rounds of the phase that follows.
In more detail, set $L$ to be the $m$ nodes in $M$ from $S_t$, 
set $R$ to be the uninformed neighbors of nodes in $L$, and set $r=\tau$.
It is easy to verify that these values satisfy the preconditions of Theorem~\ref{thm:matching}.
The theorem tells us that with constant probability, at least $m/f(r) = m/f(\tau)$ new nodes
learn the message in this phase.
Combined with our case analysis from above, it follows that if
if $|S_t| \geq |U_t|$, then this is at least $|S_t|(\alpha/(f(\tau)4))$ new nodes,
and if $|U_t| > |S_t|$ then this is at least $m \geq |U_t|(\alpha/(f(\tau)4))$ new nodes.
In both cases, we have satisfied the definition of {\em good} with constant probability.
\end{proof}

\noindent We know from Lemma~\ref{lem:good:2} that each phase is good with constant probability.
We know from Lemma~\ref{lem:good}, that $t_{max}$ good phases are sufficient. 
We must now combine these two observations to determine how many phases are needed to generate
$t_{max}$ good phases with high probability.

A starting point in this analysis is to define a random indicator variable $X_t$ for each phase $t$ such that:

\[  
X_t = \begin{cases}
   1 & \text{if phase $t$ is good}\\
   0 & \text{else.}
   \end{cases}
\]

Let $Y_T = \sum_{t=1}^T X_t$ be the number of good phases out of the first $T$ phases.
By linearity of expectation and Lemma~\ref{lem:good:2}: $E(Y_t) = \Theta(T)$.
Combining this observation with Lemma~\ref{lem:good} it follows
that the {\em expected time} to solve rumor spreading with PPUSH is in $O(t_{max})$.

We are seeking, however, a high probability bound to prove Theorem~\ref{thm:ppush}.
We cannot simply apply a Chernoff bound to concentrate around $E(Y_T)$, as
for $t\neq t'$, $X_t$ and $X_{t'}$ are not necessarily independent. 
Our final theorem proof will leverage a stochastic dominance argument to overcome
this obstacle.

\begin{proof}[Proof (of Theorem~\ref{thm:ppush}).]
According to Lemma~\ref{lem:good:2}, 
there exists some constant probability $p$ that lower bounds,
for every phase, the probability that the phase is good.
For each $t$, we define a trivial random variable $\hat X_t$ that is $1$ with independent probability $p$,
and otherwise $0$.
By definition, for each phase $t$, regardless of the history through phase $t-1$,
$X_t$ stochastically dominates $\hat X_t$.
It follows that if $\hat Y_T = \sum_{t=1}^{T} \hat X_t$ is greater than $x$ with some probability $\hat p$,
then $Y_T$ is greater than $x$ with probability at least $\hat p$.
A Chernoff bound applied to $\hat Y_T$, for $T=c\cdot t_{max}$ (where $c\geq 1$ is a sufficiently
large constant define with respect to the constant from Lemma~\ref{lem:good:2} and the Chernoff form,
and $t_{max}$ is provided Lemma~\ref{lem:good}), provides
that $\hat Y_T$ is at least $t_{max}$ with high probability in $n$.
It follows the same holds for $Y_T$.
By Lemma~\ref{lem:good}, this is a sufficient number of good phases to solve rumor spreading.
To obtain the final round bound we first note that the upper bound $T$ on phases simplifies as:
\[ T = O(t_{max}) = O( (f(\tau)/\alpha) \ln{n} ) =    (1/\alpha)\tau \Delta^{1/\tau} \log^2{n}. \]

\noindent We further note that each phase has at most $\tau \leq \log{\Delta}$ rounds,
and that $n\geq \Delta, m$, from which it follows the total number of required rounds
is upper bounded by $O((1/\alpha)\tau \Delta^{1/\tau}\log^3{n})$, as needed.

\end{proof}







 


\end{document}